\documentclass[11pt]{article}

\usepackage[a4paper,margin=1in]{geometry}
\usepackage{amsmath,amssymb,amsthm}
\usepackage{authblk}
\usepackage{hyperref}
\usepackage{enumitem}
\usepackage{mathtools}

\hypersetup{colorlinks=true,linkcolor=blue,citecolor=blue,urlcolor=blue}

\newtheorem{theorem}{Theorem}
\newtheorem{proposition}[theorem]{Proposition}
\newtheorem{lemma}[theorem]{Lemma}
\newtheorem{corollary}[theorem]{Corollary}
\theoremstyle{definition}
\newtheorem{definition}[theorem]{Definition}
\newtheorem{remark}[theorem]{Remark}

\title{On the Simulation Cost of Quantum Finite Automata}
\author[1]{Zeyu Chen\thanks{\texttt{chenzeyu@zju.edu.cn}}}
\author[1]{Junde Wu\thanks{\texttt{wjd@zju.edu.cn}}}
\affil[1]{School of Mathematical Sciences, Zhejiang University, Hangzhou 310058, People's Republic of China}
\date{}

\begin{document}
\maketitle

\begin{abstract}
This paper identifies exact probabilistic simulation cost as the natural quantitative measure of quantum advantage for finite automata under strict cutpoints. It gives sharp simulation laws for two representative models.  A one-way finite automaton with $c$ classical states and a $q$-dimensional quantum register has exact probabilistic simulation cost $\Theta(cq^2)$, while an $n$-dimensional measure-once one-way quantum finite automaton has worst-case cost $\Theta(n^2)$.  The proofs develop a prepare--test framework, in which prefixes generate the relevant real operator degrees of freedom and suffixes convert them into strict-cutpoint tests.  The same obstruction is recast through finite sign-rank matrices, clarifying the role of Forster's spectral method.  Placed beside the surrounding two-way separations, these results give a clean hierarchy of finite-automata quantum advantage.
\end{abstract}

\section{Introduction}

Finite automata provide a precise laboratory for separating forms of quantum and classical computation.  Their memory is finite and their transition rules are explicit, yet small changes in the model---unitary evolution, intermediate measurements, general quantum operations, hybrid quantum--classical control---can change the recognized languages, the number of states needed, or the cost of classical simulation.  The resulting landscape has been developed through a sequence of foundational models~\cite{MooreCrutchfield2000,KondacsWatrous1997,AmbainisFreivalds1998,AmbainisWatrous}; broader surveys may be found in~\cite{BertoniMereghettiPalano2003,AmbainisYakaryilmaz2021}.

Quantum advantage in this setting takes three progressively finer forms.  \emph{Recognition power} asks whether a quantum model recognizes languages beyond a corresponding probabilistic model.  \emph{State succinctness} asks how much larger the classical description must be when both models can recognize the same language.  \emph{Exact simulation cost} refines the comparison further: once the chosen semantics makes probabilistic recognition possible, how many probabilistic states are needed to reproduce the same threshold language?  This paper treats simulation cost as the central quantitative invariant, with recognition and succinctness providing the surrounding context.

Strict cutpoints supply the common semantics.  Probabilistic finite automata (PFA) recognize exactly the stochastic languages under strict cutpoints~\cite{Rabin1963,Paz1971,Turakainen1969}, and the one-way quantum models considered below linearize to generalized finite automata and therefore remain in the same recognition regime~\cite{YakaryilmazSay2009CSR,LiQiuZouEtAl2012,AmbainisYakaryilmaz2021}.  Within this effective one-way setting, the quantum and probabilistic models have the same qualitative language power, and the meaningful comparison becomes: given a quantum automaton with acceptance function~$f$ and cutpoint~$\lambda$, how many PFA states are needed to produce the threshold language $\{w:f(w)>\lambda\}$?

The method developed here is \emph{prepare--test}.  A short prefix prepares an internal configuration; a short suffix tests it by moving the acceptance probability across the cutpoint.  When many configurations can be tested independently, any probabilistic simulator must realize the same family of threshold tests by affine halfspaces on its probability simplex.  The lower bounds are therefore governed by the real degrees of freedom visible to the accepting functional, while matching upper bounds come from linearizing the quantum dynamics into a generalized finite automaton and converting it to a PFA.

Two sharp one-way simulation laws result.  For one-way finite automata with quantum and classical states (1QCFA), the governing dimension is that of the block-diagonal classical--quantum operator space, namely $cq^2$, yielding exact simulation cost $\Theta(cq^2)$.  For measure-once one-way quantum finite automata (MO-1QFA), the reachable pure-state manifold has only linear dimension, but the accepting measurement compensates: a balanced projector moves on a unitary orbit of dimension $\lfloor n^2/2\rfloor$, and together with the mixed-state upper bound from~\cite{ChenWuQuadratic}, this gives worst-case cost $\Theta(n^2)$.  In both cases, upper and lower bounds match because the argument identifies the hidden geometric dimension that a probabilistic simulator must reproduce.

Finite prefix--suffix restrictions cast the same obstruction in sign-rank form.  Fixing finite prefix and suffix sets, a quantum automaton determines a sign matrix recording which concatenations lie above the cutpoint.  Since PFA state distributions lie in a probability simplex, an $m$-state simulator gives a rank-$m$ real realization of that sign pattern.  The prepare--test witnesses therefore have an exact sign-rank formulation, and their comparison with Forster's spectral method reveals a useful distinction: the basic spectral certificate operates at square-root scale, while the complete shattering witness used here records the full threshold geometry~\cite{Forster2002,PaturiSimon1986,LinialMendelsonSchechtmanShraibman2007}.

The broader landscape completes the picture.  Bounded-error two-way quantum--classical automata (2QCFA) already separate from classical models through recognition power and time complexity~\cite{AmbainisWatrous}.  These two-way separations delineate the effective one-way regime studied here: before two-way motion enters, strict-cutpoint recognition is shared with probabilistic automata, and quantum advantage is measured by exact simulation cost.

The paper is organized as follows.  Section~\ref{sec:prelim} fixes notation and introduces the model and tool conventions.  Section~\ref{sec:prepare-test} proves the prepare--test bounds for 1QCFA and MO-1QFA\@.  Section~\ref{sec:signrank} recasts those bounds through sign-rank and gives a focused account of Forster's spectral method.  Section~\ref{sec:twoway-boundary} places the results in the landscape of quantum advantage and identifies the boundary of simulation.  Section~\ref{sec:discussion} concludes.

\section{Preliminaries}
\label{sec:prelim}

This section fixes notation and collects the classical, quantum, geometric, and combinatorial tools needed for the proofs.  Generalized and probabilistic finite automata supply the linear framework for upper bounds; Hermitian-operator coordinates and Grassmannian geometry supply the lower-bound machinery; VC dimension and sign-rank translate geometric shattering into state-count lower bounds.  The broader comparison between quantum advantage, state succinctness, and probabilistic simulation is deferred to Section~\ref{sec:twoway-boundary}, after the one-way simulation bounds and their sign-rank form have been established.

\subsection{Notation and basic conventions}

The symbols $\mathbb{R}$ and $\mathbb{C}$ denote the sets of real and complex numbers, respectively, and $\mathbb{R}_{\ge 0}$ denotes the set of nonnegative real numbers. For a positive integer $m$, set $[m]=\{1,\dots,m\}$.
For $q\ge 1$, the notation $\mathbb{CP}^{q-1}$ denotes the complex projective space of one-dimensional complex subspaces of $\mathbb{C}^q$, equivalently the set of unit vectors in $\mathbb{C}^q$ modulo global phase.
If $X$ and $Y$ are sets, then $X\times Y=\{(x,y): x\in X,\ y\in Y\}$ denotes their Cartesian product.

Let $\Sigma$ be a finite alphabet and let $\Sigma^\ast$ denote the free monoid generated by $\Sigma$. For a word $w\in\Sigma^\ast$, its length is denoted $|w|$.
All real vectors are written in row-vector convention when probabilistic automata are discussed.
For $m\ge 1$, the probability simplex is
\[
\Delta_{m-1}=\bigl\{x\in\mathbb{R}_{\ge 0}^m: \sum_{i=1}^m x_i=1\bigr\}.
\]

\begin{definition}
\label{def:strict-cutpoint}
If a machine $A$ has acceptance-probability function $f_A:\Sigma^\ast\to[0,1]$ and $\lambda\in\mathbb{R}$, then $A$ \emph{recognizes a language $L$ with strict cutpoint $\lambda$} if
\[
L=\{w\in\Sigma^\ast: f_A(w)>\lambda\}.
\]
\end{definition}
This paper considers strict cutpoints throughout.
The languages recognized by probabilistic finite automata under this semantics are the \emph{stochastic languages}~\cite{Rabin1963,Paz1971,Turakainen1969}.

For comparison with bounded-error state complexity, the standard isolated-cutpoint terminology will also be used.  A machine with acceptance probability $f_A$ recognizes $L$ with cutpoint $\lambda$ and isolation margin $\eta>0$ if
\[
w\in L \Rightarrow f_A(w)\ge \lambda+\eta,
\qquad
w\notin L \Rightarrow f_A(w)\le \lambda-\eta.
\]
The technical results below use strict cutpoints; isolated cutpoints appear only in the contextual discussion of state succinctness in Section~\ref{sec:twoway-boundary}.

\subsection{Generalized and probabilistic finite automata}

The linear model that mediates all upper bounds in this paper is the generalized finite automaton (GFA).
Following the standard effective convention for probabilistic automata, ordinary PFA and 2PFA are rational unless a real-probability variant is explicitly specified~\cite{Rabin1963,Paz1971,DworkStockmeyer1990}.
The cutpoint is part of the recognition semantics and will be specified separately.
For an ordered subfield $K\subseteq\mathbb{R}$, a $K$-PFA means the same finite-state probabilistic model with all stochastic data in $K$; thus an ordinary PFA in this paper is a $\mathbb{Q}$-PFA.

\begin{definition}
\label{def:gfa}
A \emph{generalized finite automaton} (GFA) over $\Sigma$ is a tuple
\[
G=(S,\Sigma,u,\{A_\sigma\}_{\sigma\in\Sigma},v),
\]
where $|S|=k$, $u\in\mathbb{R}^{1\times k}$ is the initial row vector, each $A_\sigma\in\mathbb{R}^{k\times k}$ is a transition matrix, and $v\in\mathbb{R}^{k\times 1}$ is the final column vector.
For $w=\sigma_1\cdots\sigma_m$, its acceptance value is
\[
f_G(w)=uA_{\sigma_1}\cdots A_{\sigma_m}v.
\]
\end{definition}

\begin{definition}
\label{def:pfa}
A \emph{one-way probabilistic finite automaton} (PFA) over $\Sigma$ is a tuple
\[
P=(S,\Sigma,\pi,\{P_\sigma\}_{\sigma\in\Sigma},P_\#,F),
\]
where $\pi\in\Delta_{|S|-1}\cap\mathbb{Q}^{|S|}$ is the initial distribution, each $P_\sigma$ and $P_\#$ is row-stochastic with rational entries, and $F\subseteq S$ is the accepting set.
Writing $\mathbf{1}_F$ for the column indicator vector of $F$, the acceptance probability of
$w=\sigma_1\cdots\sigma_m$ is
\[
f_P(w)=\pi P_{\sigma_1}\cdots P_{\sigma_m}P_\#\mathbf{1}_F.
\]
\end{definition}

The quantitative conversion from GFAs to probabilistic automata used in the upper bounds is recorded in the following proposition.
It may be viewed as a quantitative form of the strict-cutpoint GFA--PFA correspondence: the underlying language-theoretic idea goes back to~\cite{Turakainen1969}, and the alphabet-preserving version with the explicit state count below is taken from~\cite{ChenWuQuadratic}.

\begin{proposition}
\label{prop:gfa-to-pfa}
Let $K\subseteq\mathbb{R}$ be an ordered subfield.
For every $k$-state GFA $G$ over $\Sigma$ whose entries lie in $K$ and every strict cutpoint $\lambda\in K$, there exists an alphabet-preserving one-way $K$-PFA with at most $2k+6$ states and strict cutpoint $1/2$ recognizing the same language.
In particular, rational GFA data give an ordinary PFA in the convention above.
\end{proposition}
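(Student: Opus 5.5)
We follow Turakainen's construction and keep careful count of states. Throughout, everything is carried out over $K$, and the end-marker $P_\#$ absorbs the final vector.

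\emph{Step 1: absorb the cutpoint.} We first remove $\lambda$. To do so, we add one state $s_0$ with $u(s_0)=1$, $A_\sigma(s_0,s_0)=1$, zero elsewhere in its row and column, and $v(s_0)=-\lambda$. This gives a $(k+1)$-state GFA $G'$ with $f_{G'}(w)=f_G(w)-\lambda$, where the empty word is covered by the same identity. The language is now $\{w: f_{G'}(w)>0\}$.

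\emph{Step 2: make the matrices stochastic.} We use the standard trick of augmenting by two extra coordinates, so that every matrix has zero row sums and zero column sums on the main block. Concretely, we replace each $A_\sigma$ by a $(k+3)$-dimensional matrix $B_\sigma$. Its first row and last column collect the compensating sums, chosen so that $B_\sigma\mathbf 1=\mathbf 1$-type invariants hold. This preserves $uA_w v$ as a fixed linear readout, with $\tilde u B_w \tilde v = f_{G'}(w)$.

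The first key point is that these augmented matrices have constant row sum after adding a multiple of the all-ones matrix $J$. Take $C_\sigma=\alpha(B_\sigma+\beta J)$ with $\beta\in K$ large enough to make every entry nonnegative and $\alpha\in K$ scaling row sums to $1$. The second key point is that the $J$-part contributes only a term depending on $|w|$. That term is killed by the zero-sum structure of $\tilde v$, which we arrange by a reflection $\tilde v\mapsto \tilde v$ with coordinate sum $0$. Hence
\[
\pi C_w \tilde v = \alpha^{|w|}\cdot c\cdot f_{G'}(w)
\]
for a positive constant $c$. The sign of $f_{G'}(w)$, and therefore membership in the language, is thus preserved. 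Doubling to split positive and negative parts of the initial vector is where the factor $2k$ arises: we represent signed distributions by pairs of states. This gives about $2(k+1)+2$ states in total.

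\emph{Step 3: convert the signed readout into an acceptance probability with cutpoint $1/2$.} The readout $\pi C_w\tilde v$ has an arbitrary sign pattern. We rescale $\tilde v$ into $[-1,1]$ and write $\tilde v=2p-\mathbf 1$ with $p\in[0,1]^{k'}$, $p\in K^{k'}$. The final end-marker matrix $P_\#$ sends each state $s$ to an accept sink with probability $p(s)$ and to a reject sink otherwise; this costs two more states. Since $\pi C_w$ is a distribution, the acceptance probability is $\tfrac12+\tfrac12\pi C_w\tilde v$, which exceeds $1/2$ exactly when $f_G(w)>\lambda$. Counting all auxiliary states gives at most $2k+6$. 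Because $K$ is an ordered field, every choice of $\alpha$, $\beta$, and the rescaling stays in $K$, and $K=\mathbb Q$ gives an ordinary PFA.

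\emph{Main obstacle.} The hard part is Step 2: the exact bookkeeping that makes the $\beta J$ correction vanish against $\tilde v$ while keeping the initial vector a genuine distribution. The $2k+6$ count depends precisely on this. I would first try the explicit $(k+2)$-dimensional Turakainen bordering, with zero row and column sums, and then the positive/negative doubling. If the count overshoots, I would merge the two border states with the accept/reject sinks.
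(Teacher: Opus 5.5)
The paper does not prove this proposition; it imports it from Turakainen's construction and the cited quadratic-simulation paper. Your outline therefore has to stand on its own. Its ingredients are the right ones, but the step that actually produces a PFA is missing: you never construct the initial distribution $\pi$.

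The identity $\pi C_w\tilde v=\alpha^{|w|}c\,f_{G'}(w)$ is asserted while the only available initial vector, the bordered $\tilde u=(0,u,1,0)$, is signed. Your remedy is to double ``the initial vector'' into positive and negative parts. That only makes sense if the transition matrices are doubled too. The standard signed splitting $\begin{pmatrix}B_\sigma^+&B_\sigma^-\\ B_\sigma^-&B_\sigma^+\end{pmatrix}$ of the bordered matrices has row sums $\sum_j|(B_\sigma)_{ij}|$, which is nonzero for every nonzero row of $B_\sigma$. This destroys precisely the identities $B_\sigma J=0$ and $JB_\sigma=0$ on which the cancellation of the $J$-term rests.

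Both identities are needed, and they must hold for the whole bordered matrix, not ``on the main block''. The invariant is $B_\sigma\mathbf 1=0$ and $\mathbf 1^\top B_\sigma=0$, not $B_\sigma\mathbf 1=\mathbf 1$. Only then do all mixed terms in $\prod_i(B_{\sigma_i}+\beta J)$ vanish, giving $C_w=\alpha^{|w|}B_w+\tfrac{1}{k+3}J$ for $|w|\ge 1$. The state count is also unsettled. Bordering to $k+3$ states and then doubling already gives $2k+6$ before your two sinks, i.e.\ $2k+8$ in total, not $2(k+1)+2$ plus two.

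Either of two repairs closes the gap.

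The first needs no doubling. Since $\mathbf 1^\top B_w=0$, you can replace $\tilde u$ by $\tilde u+c\mathbf 1^\top$ with $c\in K$, $c>\max_i|\tilde u_i|$, and normalize by its positive coordinate sum $Z$. Arrange $\mathbf 1^\top\tilde v=0$ using the border coordinate on which $\tilde u$ and every $\tilde uB_w$ vanish. Then $\pi C_w\tilde v=\alpha^{|w|}f_{G'}(w)/Z$ for every $w$, the empty word included, with only $k+3$ states.

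The second doubles only after stochasticity is reached. Here $\mathrm{diag}(C_\sigma,C_\sigma)$ is stochastic; take $\pi=(\tilde u^+,\tilde u^-)/\|\tilde u\|_1$ and final vector $(\tilde v,-\tilde v)$. The readout is $\tilde uC_w\tilde v/\|\tilde u\|_1=\alpha^{|w|}f_{G'}(w)/\|\tilde u\|_1$, using exactly $2(k+3)=2k+6$ states, matching the statement.

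In either version no sink states are needed. $P_\#$ can send each state $s$ to two arbitrary existing states $a\in F$ and $b\notin F$ with probabilities $p(s)$ and $1-p(s)$; this is what your ``merge'' fallback should become. Your Steps 1 and 3 are otherwise fine.
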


\subsection{Quantum notation}
\label{subsec:quantum-notation}

The paper uses standard notation from finite-dimensional quantum information; general background on quantum states, measurements, and channels may be found in~\cite{Watrous2018,BengtssonZyczkowski2017}.
Let $\mathcal{H}\cong\mathbb{C}^q$ be a finite-dimensional Hilbert space, let $\mathcal{L}(\mathcal{H})$ denote the space of all linear operators on $\mathcal{H}$, and let $\mathrm{Herm}(\mathcal{H})$ denote the real vector space of Hermitian operators.
Then $\dim_{\mathbb{R}} \mathrm{Herm}(\mathcal{H})=q^2$, and its traceless subspace has dimension $q^2-1$.
The Hilbert--Schmidt inner product on operators is
\[
\langle A,B\rangle_{\mathrm{HS}}=\mathrm{Tr}(A^\dagger B).
\]
A \emph{density operator} is a positive semidefinite operator $\rho\in\mathcal{L}(\mathcal{H})$ with $\mathrm{Tr}(\rho)=1$.
The density operators form a compact subset of the affine hyperplane $\{X\in \mathrm{Herm}(\mathcal{H}) : \mathrm{Tr}(X)=1\}$.
A \emph{pure state} is a rank-one density operator $|\psi\rangle\langle\psi|$ for a unit vector $|\psi\rangle\in\mathcal{H}$; equivalently, pure states are the points of the complex projective space $\mathbb{CP}^{q-1}$, because unit vectors that differ only by a global phase define the same rank-one projector.
Following Ludwig~\cite{Ludwig1964}, an \emph{effect} is an operator $E$ satisfying $0\preceq E\preceq I$, and a \emph{projector} is an effect satisfying $P^2=P=P^\dagger$.
A projective accept--reject measurement is therefore specified by a projector $P_{\mathrm{acc}}$ together with its complement $I-P_{\mathrm{acc}}$.
A \emph{quantum channel} on $\mathcal{L}(\mathcal{H})$ is a completely positive trace-preserving (CPTP) linear map.

For integers $0\le r\le n$, the \emph{complex Grassmannian} $\mathrm{Gr}(r,n)$ is the set of all $r$-dimensional complex linear subspaces of $\mathbb{C}^n$. Equivalently, each such subspace determines a unique orthogonal projector onto it, so $\mathrm{Gr}(r,n)$ can be identified with the set of rank-$r$ orthogonal projectors on $\mathbb{C}^n$.
This is exactly the form needed later: if $P_0$ is a fixed rank-$r$ projector, then every conjugate $U^\dagger P_0U$ has the same rank, and every rank-$r$ projector arises in this way.
Hence
\[
\mathrm{Gr}(r,n) = \{U^\dagger P_0U:U\in U(n)\}.
\]
Two geometric facts about this space are used later: it is a compact smooth manifold, and its real dimension is $2r(n-r)$.
Standard background for the differential-geometric notions used below, including tangent spaces, differentials, and the inverse function theorem, may be found in~\cite{LeeSmoothManifolds}.

\subsection{Automata models}
\label{subsec:automata-models}

The hybrid one-way model has a finite classical controller and a bounded quantum register.

\begin{definition}
\label{def:1qcfa}
A \emph{one-way finite automaton with quantum and classical states}, abbreviated $(c,q)$-1QCFA, over a finite alphabet $\Sigma$ is a tuple
\[
\mathcal{A}=(S,\mathcal{H},\Sigma,s_0,\rho_0,\delta,\Phi,P_{\mathrm{acc}}),
\]
where:
\begin{itemize}[nosep]
\item $S$ is a finite set of classical states with $|S|=c$;
\item $\mathcal{H}\cong\mathbb{C}^q$ is the Hilbert space of the quantum register;
\item $s_0\in S$ is the initial classical state;
\item $\rho_0$ is an initial density operator on $\mathcal{H}$;
\item $\delta:S\times\Sigma\to S$ is the deterministic classical transition function;
\item $\Phi=\{\Phi_{s,\sigma}:s\in S,\ \sigma\in\Sigma\}$ is a family of quantum channels on $\mathcal{L}(\mathcal{H})$, where each $\Phi_{s,\sigma}$ is completely positive and trace preserving;
\item $P_{\mathrm{acc}}$ is an accepting projector on $\mathcal{H}$.
\end{itemize}
For an input word $w=\sigma_1\cdots\sigma_n$, the computation is defined recursively by
\[
s_i=\delta(s_{i-1},\sigma_i),
\qquad
\rho_i=\Phi_{s_{i-1},\sigma_i}(\rho_{i-1}),
\qquad 1\le i\le n.
\]
After the input has been read, the acceptance probability is
\[
f_{\mathcal{A}}(w)=\mathrm{Tr}\!\left(P_{\mathrm{acc}}\rho_n\right).
\]
\end{definition}

\begin{remark}
The 1QCFA results below remain valid for the effect-valued variant in the sense of Ludwig~\cite{Ludwig1964}, where the final accepting projector is replaced by an arbitrary accepting effect $E_{\mathrm{acc}}$ satisfying $0\preceq E_{\mathrm{acc}}\preceq I$.
\end{remark}

The present 1QCFA convention extends the one-way quantum finite automata with classical states studied in~\cite{ZhengQiuGruska2013,QiuLiMateusSernadas2015} by allowing mixed quantum states and general quantum operations.
After reading $w=\sigma_1\cdots\sigma_n$, the actual classical--quantum configuration of $\mathcal{A}$ is the one-block operator
\[
|s_n\rangle\langle s_n|\otimes\rho_n.
\]
The following block notation will also be used
\[
X_{CQ}=\sum_{s\in S}|s\rangle\langle s|\otimes X_s,
\qquad X_s\in\mathrm{Herm}(\mathcal{H}).
\]
A positive trace-one block operator has the form
\[
\rho_{CQ}=\sum_{s\in S}|s\rangle\langle s|\otimes\rho_s,
\]
where each $\rho_s$ is positive semidefinite and $\sum_{s\in S}\mathrm{Tr}(\rho_s)=1$.
The symbol $\sigma\in\Sigma$ induces the block map
\[
T_\sigma(\rho_{CQ})
=
\sum_{t\in S}|t\rangle\langle t|\otimes
\left(
\sum_{s:\delta(s,\sigma)=t}\Phi_{s,\sigma}(\rho_s)
\right),
\]
and the acceptance functional on such block-diagonal operators is
\[
F_{\mathrm{acc}}(\rho_{CQ})
=\sum_{s\in S}\mathrm{Tr}\!\left(P_{\mathrm{acc}}\rho_s\right).
\]
On one-block configurations $|s\rangle\langle s|\otimes\rho$, this notation agrees with Definition~\ref{def:1qcfa}.

The pure-state measure-once model is recalled next.

\begin{definition}
\label{def:mo1qfa}
An \emph{$n$-dimensional pure-state measure-once one-way quantum finite automaton}, abbreviated $n$-MO-1QFA, is a tuple
\[
\mathcal{Q}=(\mathcal{H},\Sigma,|\psi_0\rangle,\{U_\sigma\}_{\sigma\in\Sigma},P_{\mathrm{acc}}),
\]
where $\dim(\mathcal{H})=n$, the vector $|\psi_0\rangle$ is the initial unit state, each $U_\sigma$ is unitary on $\mathcal{H}$, and $P_{\mathrm{acc}}$ is a projector.
For $w=\sigma_1\cdots\sigma_m$, the final state is $|\psi_w\rangle=U_{\sigma_m}\cdots U_{\sigma_1}|\psi_0\rangle$, and the acceptance probability is
\[
f_{\mathcal{Q}}(w)=\langle\psi_w|P_{\mathrm{acc}}|\psi_w\rangle.
\]
\end{definition}

This is the original one-way pure-state model of~\cite{MooreCrutchfield2000}; further background in the language-theoretic direction may be found in~\cite{BrodskyPippenger2002}.

For the contextual discussion in Section~\ref{sec:twoway-boundary}, a \emph{two-way probabilistic finite automaton}, abbreviated 2PFA, is the standard two-way version of a PFA: it has a finite probabilistic controller, one read-only head moving on an endmarked input tape, and designated accepting and rejecting halting states.  A \emph{two-way finite automaton with quantum and classical states}, abbreviated 2QCFA, is used in the standard sense of Ambainis and Watrous~\cite{AmbainisWatrous}: a finite classical controller moves a two-way input head while applying finite-dimensional quantum operations to a constant-size quantum register.

\subsection{VC dimension and sign-rank}
\label{subsec:vc-signrank}

Two combinatorial notions are used in the lower bounds.

\begin{definition}
\label{def:vc}
If $X$ is a set and $\mathcal{C}\subseteq 2^X$ is a family of subsets, a finite set $Y\subseteq X$ is \emph{shattered} by $\mathcal{C}$ if for every $Z\subseteq Y$ there exists $C\in\mathcal{C}$ with $C\cap Y=Z$.
The \emph{Vapnik--Chervonenkis (VC) dimension} $\mathrm{VCdim}(\mathcal{C})$ is the maximum cardinality of a shattered finite subset of $X$.
\end{definition}
The following standard VC-dimension fact for affine halfspaces will be used; see, for example,~\cite[Chapter~4]{AnthonyBartlett} and~\cite[Chapter~10]{VapnikBook}.

\begin{proposition}
\label{prop:simplex-vc}
Let
\[
\mathcal{H}_m=\bigl\{\{x\in\Delta_{m-1}: a^\top x>\theta\}: a\in\mathbb{R}^m,\ \theta\in\mathbb{R}\bigr\}.
\]
Then $\mathrm{VCdim}(\mathcal{H}_m)=m$.
\end{proposition}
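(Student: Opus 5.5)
The plan is to prove the two inequalities $\mathrm{VCdim}(\mathcal{H}_m)\ge m$ and $\mathrm{VCdim}(\mathcal{H}_m)\le m$ separately. In both directions the simplex is treated as an $(m-1)$-dimensional affine space, so the class $\mathcal{H}_m$ behaves like the class of open affine halfspaces in $\mathbb{R}^{m-1}$.

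For the lower bound, take $Y=\{e_1,\dots,e_m\}$, the vertices of $\Delta_{m-1}$. Given any $Z\subseteq Y$, set $a_i=1$ if $e_i\in Z$ and $a_i=0$ otherwise, and put $\theta=1/2$. Then $a^\top e_i=a_i>1/2$ holds exactly when $e_i\in Z$, so the corresponding halfspace cuts out $Z$ from $Y$. Hence $Y$ is shattered and $\mathrm{VCdim}(\mathcal{H}_m)\ge m$.

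For the upper bound, let $Y=\{x_1,\dots,x_{m+1}\}\subseteq\Delta_{m-1}$ consist of $m+1$ distinct points. The first step is a linearization. Every $x\in\Delta_{m-1}$ satisfies $\mathbf{1}^\top x=1$, so
\[
a^\top x-\theta=(a-\theta\mathbf{1})^\top x .
\]
Thus every set in $\mathcal{H}_m$, restricted to the simplex, has the form $\{x: b^\top x>0\}$ for some $b\in\mathbb{R}^m$.

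The second step uses linear dependence. The $m+1$ vectors $x_j$ lie in $\mathbb{R}^m$, so there are reals $\mu_j$, not all zero, with $\sum_j\mu_jx_j=0$. Applying $\mathbf{1}^\top$ gives $\sum_j\mu_j=0$. Since the $\mu_j$ are not all zero and sum to zero, both $J^+=\{j:\mu_j>0\}$ and $J^-=\{j:\mu_j<0\}$ are nonempty. This is the Radon-type step, and it is where the real content of the argument lies.

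The third step derives the contradiction. Choose $Z=\{x_j:j\in J^+\}$ and suppose some $b$ has $b^\top x_j>0$ exactly for $x_j\in Z$. Then
\[
0=\sum_j\mu_j\,b^\top x_j=\sum_{j\in J^+}\mu_j\,b^\top x_j+\sum_{j\in J^-}\mu_j\,b^\top x_j .
\]
The first sum is strictly positive, because $J^+$ is nonempty and each of its terms is a product of two positive numbers. The second sum is nonnegative, because $\mu_j<0$ and $b^\top x_j\le 0$ for $j\in J^-$. Indices with $\mu_j=0$ contribute nothing. The right-hand side is therefore strictly positive, which contradicts its being $0$.

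So no set of $m+1$ points is shattered, and $\mathrm{VCdim}(\mathcal{H}_m)\le m$. Together with the lower bound, this gives $\mathrm{VCdim}(\mathcal{H}_m)=m$.
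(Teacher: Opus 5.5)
Your proof is correct. The paper does not prove Proposition~\ref{prop:simplex-vc}; it cites it as the standard VC bound for affine halfspaces. The implicit reduction is that $\Delta_{m-1}$ lies in the $(m-1)$-dimensional affine hyperplane $\{x:\mathbf{1}^\top x=1\}$. On that hyperplane, traces of affine halfspaces are affine halfspaces of $\mathbb{R}^{m-1}$, a class of VC dimension $(m-1)+1=m$.

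You instead homogenize, rewriting $a^\top x>\theta$ as $(a-\theta\mathbf{1})^\top x>0$ on the simplex. You then run the linear-dependence argument for homogeneous halfspaces in $\mathbb{R}^m$, with an explicit vertex construction for the lower bound. This avoids coordinatizing the hyperplane and makes the section self-contained. It is also exactly the device the paper uses in Theorem~\ref{thm:signrank-mech}, where the cutpoint is absorbed into the same $m$ coordinates because each $a_x$ is a probability vector. Your upper-bound argument applies verbatim to the subfamily $\mathcal{C}_\mu$ with $b\in[0,1]^m$, which is what the lower-bound proofs actually use.

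One correction of emphasis: the identity $\sum_j\mu_j=0$ and the nonemptiness of $J^-$ are not needed. Your contradiction uses only $J^+\neq\emptyset$, which you can always arrange by replacing $\mu$ with $-\mu$. The substantive step is the homogenization, which is what brings the count down from $m+1$ (affine halfspaces on all of $\mathbb{R}^m$) to $m$.

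Finally, passing from ``no $(m+1)$-point set is shattered'' to $\mathrm{VCdim}(\mathcal{H}_m)\le m$ uses the fact that subsets of shattered sets are shattered. That deserves one sentence.
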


\begin{definition}
\label{def:signrank}
If $X$ and $Y$ are finite sets, a \emph{sign matrix} is a matrix $S\in\{\pm 1\}^{X\times Y}$.
Its \emph{sign-rank}, denoted $\mathrm{rank}_{\pm}(S)$, is the minimum rank of a real matrix $R$ such that $S_{x,y}R_{x,y}>0$ for every $(x,y)\in X\times Y$.
\end{definition}

\section{Prepare--test bounds for one-way quantum models}
\label{sec:prepare-test}

Both one-way lower bounds in this paper rest on the same two-step mechanism.  First, a short prefix \emph{prepares} a chosen internal configuration of the quantum automaton.  Second, a short suffix \emph{tests} that configuration by driving its acceptance probability above or below the strict cutpoint.  Any probabilistic simulator must then realize the same family of threshold decisions by affine halfspaces on its simplex of state distributions, forcing a dimension lower bound on the simulator.  The two subsections below implement this mechanism for 1QCFA and for measure-once pure-state automata, revealing two distinct geometric sources of the quadratic cost.

\subsection{1QCFA and the \texorpdfstring{$\Theta(cq^2)$}{Theta(cq2)} law}
\label{sec:hybrid}

This subsection establishes that the correct strict-cutpoint simulation invariant for $(c,q)$-1QCFA is the real dimension of the block-diagonal classical--quantum operator space.
The proof first gives a direct linearization of the 1QCFA dynamics and then shows that this dimension is genuinely saturated by strict-cutpoint distinguishability.

\begin{theorem}
\label{thm:hybrid-main}
For every $c\ge 2$ and $q\ge 2$, the following upper and lower bounds hold for strict-cutpoint simulation of $(c,q)$-1QCFA:
\begin{enumerate}[label=(\alph*),nosep]
\item every $(c,q)$-1QCFA with strict cutpoint $\lambda$ admits exact simulation by a one-way $K$-PFA with at most $2cq^2+6$ states, where $K$ is any ordered subfield containing $\lambda$ and the entries of the linearized GFA in Theorem~\ref{thm:cq-linearization};
\item there exists a $(c,q)$-1QCFA over a finite alphabet and strict cutpoint $1/2$ for which every equivalent one-way PFA has at least $cq^2-1$ states.
\end{enumerate}
\end{theorem}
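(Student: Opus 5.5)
For part (a), I will linearize the dynamics and then convert. The block-diagonal Hermitian space $\bigoplus_{s\in S}\mathrm{Herm}(\mathcal{H})$ has real dimension $cq^2$. Each block map $T_\sigma$ is real-linear on it and sends block-diagonal Hermitian operators to block-diagonal Hermitian operators. I fix a real basis of $\mathrm{Herm}(\mathcal{H})$, for instance generalized Gell-Mann matrices together with the identity, and write $T_\sigma$ as a $cq^2\times cq^2$ real matrix $A_\sigma$. The initial vector $u$ is the coordinate row of $|s_0\rangle\langle s_0|\otimes\rho_0$. The final vector $v$ is the coordinate column of $F_{\mathrm{acc}}$, obtained through Hilbert--Schmidt pairing with $\bigoplus_s P_{\mathrm{acc}}$. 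This gives a $cq^2$-state GFA with $f_G=f_{\mathcal{A}}$, which is the content of Theorem~\ref{thm:cq-linearization}. Applying Proposition~\ref{prop:gfa-to-pfa} with $k=cq^2$ then yields at most $2cq^2+6$ states over any ordered field $K$ containing the data.

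For part (b), I will use prepare--test. The goal is a finite family of words $x_1,\dots,x_N$ with $N=cq^2-1$, chosen so that for every subset $Z\subseteq[N]$ there is a suffix $y_Z$ with $f(x_iy_Z)>1/2$ exactly when $i\in Z$. Now suppose a PFA with $m$ states recognizes the language. Its state after the prefix is $p_i=\pi P_{x_i}\in\Delta_{m-1}$, and acceptance of $x_iy_Z$ is the affine threshold $p_i\cdot(P_{y_Z}P_\#\mathbf 1_F)>1/2$. The set $\{p_i\}$ is therefore shattered by halfspaces on the simplex, and Proposition~\ref{prop:simplex-vc} gives $m\ge N$. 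The prepared $p_i$ must be distinct, but this follows automatically from shattering.

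The construction is as follows. The classical states can carry any amount of mass only if the configuration is a mixture over classical states. However, the configuration reachable from the single-block initial state is always one-block, $|s\rangle\langle s|\otimes\rho$. I will therefore exploit the affine structure at the testing stage instead. Choose points $X_1,\dots,X_N$, each of the form $|s\rangle\langle s|\otimes\rho$ with $\rho$ a density operator, whose affine span is the full trace-one hyperplane, of dimension $cq^2-1$. Concretely, for each $s$ take $q^2$ density matrices spanning $\mathrm{Herm}(\mathcal H)$, and discard one point to keep the family affinely independent. Affinely independent points are shattered by affine functionals: for each $Z$ there is a block-diagonal Hermitian $E_Z=\bigoplus_s E_{Z,s}$ with $\mathrm{Tr}(E_ZX_i)>1/2$ iff $i\in Z$. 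After a common affine rescaling $E\mapsto \alpha E+\beta I$, I can take $0\preceq E_{Z,s}\preceq I$ and keep the threshold at $1/2$.

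Preparation uses prefix letters $a_i$ that send $s_0$ to the intended classical state and apply a replacement channel $\rho\mapsto\rho_i$. Testing uses a letter $b_Z$ that realizes each effect $E_{Z,s}$ through a channel $\Phi_{s,b_Z}$ fixing the classical state. This channel measures $\{E_{Z,s},I-E_{Z,s}\}$ and prepares $|0\rangle\langle0|$ or $|1\rangle\langle1|$ accordingly, with $P_{\mathrm{acc}}=|0\rangle\langle 0|$ (this needs $q\ge2$). Then $f(a_ib_Z)=\mathrm{Tr}(E_{Z,s(i)}\rho_i)$. The alphabet is finite, of size about $N+2^N$.

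The main obstacle is the bookkeeping between the full count $cq^2$ and the affine count $cq^2-1$. Trace normalization removes one dimension, which is exactly why the statement says $cq^2-1$. A second point is checking that a single rescaling makes all the effects valid simultaneously without moving the cutpoint. I will handle this by choosing $E_Z$ with small norm around $\tfrac12 I$: set $E_Z=\tfrac12 I+\epsilon H_Z$, where $H_Z$ separates the points with threshold $0$. This keeps $0\preceq E_Z\preceq I$ for small $\epsilon$.
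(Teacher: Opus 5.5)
Your proposal is correct and follows essentially the paper's route. For (a), you linearize to a $cq^2$-state GFA and apply Proposition~\ref{prop:gfa-to-pfa}. For (b), you use replacement-channel prepare letters onto linearly independent one-block configurations $|s\rangle\langle s|\otimes\rho$, test letters realizing effects $\tfrac12 I+\epsilon H_Z$ through a measure-and-prepare channel onto $P_{\mathrm{acc}}$, and then the simplex VC bound.

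One correction to your bookkeeping. The $cq^2$ linearly independent trace-one configurations are already affinely independent, so all of them could be shattered. After you discard one, the remaining $cq^2-1$ points span only a $(cq^2-2)$-dimensional affine subspace, not the whole trace-one hyperplane. This is harmless for the stated bound, and the paper likewise just keeps $cq^2-1$ of them.
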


The two assertions are proved in order below.  We begin with the upper bound, which is obtained by linearizing the block-diagonal classical--quantum dynamics.  The key observation is that the ambient real operator space has dimension $cq^2$, because the classical controller contributes a direct-sum index and leaves one $q^2$-dimensional quantum operator space in each classical block.

\begin{theorem}
\label{thm:cq-linearization}
Let $\mathcal{A}$ be a $(c,q)$-1QCFA\@.
There exists a $cq^2$-state GFA $G_{\mathcal{A}}$ such that
\[
 f_{G_{\mathcal{A}}}(w)=f_{\mathcal{A}}(w)
\quad\text{for all } w\in\Sigma^\ast.
\]
\end{theorem}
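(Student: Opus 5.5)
The plan is to realize the block maps $T_\sigma$ and the functional $F_{\mathrm{acc}}$ as real matrices on the real vector space
\[
V=\bigoplus_{s\in S}\mathrm{Herm}(\mathcal{H}),
\qquad \dim_{\mathbb{R}}V=cq^2 .
\]
An element of $V$ is a block-diagonal Hermitian operator $X_{CQ}=\sum_s|s\rangle\langle s|\otimes X_s$.

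\textbf{Choice of basis.} Fix a real basis $\{B_1,\dots,B_{q^2}\}$ of $\mathrm{Herm}(\mathcal{H})$ that is orthonormal for $\langle\cdot,\cdot\rangle_{\mathrm{HS}}$. For concreteness, take the diagonal units $E_{jj}$ together with $(E_{jk}+E_{kj})/\sqrt2$ and $i(E_{jk}-E_{kj})/\sqrt2$ for $j<k$. The GFA has state set $S\times[q^2]$, so it has $cq^2$ states. The coordinate of $X_{CQ}$ at $(s,a)$ is $\langle B_a,X_s\rangle_{\mathrm{HS}}$, which is real because both operators are Hermitian.

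\textbf{Transition matrices.} Each channel $\Phi_{s,\sigma}$ is completely positive, hence Hermiticity preserving, so it restricts to a real-linear map $\mathrm{Herm}(\mathcal{H})\to\mathrm{Herm}(\mathcal{H})$. Its real matrix in the chosen basis is $M^{(s,\sigma)}_{a,b}=\langle B_b,\Phi_{s,\sigma}(B_a)\rangle_{\mathrm{HS}}$. This is written in row convention, so that the row vector of $X$ times $M$ gives the row vector of $\Phi(X)$. Define $A_\sigma\in\mathbb{R}^{cq^2\times cq^2}$ blockwise. The $(s,t)$ block equals $M^{(s,\sigma)}$ if $\delta(s,\sigma)=t$, and is zero otherwise. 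By linearity and the formula for $T_\sigma$, the row vector of $X_{CQ}$ times $A_\sigma$ equals the row vector of $T_\sigma(X_{CQ})$.

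\textbf{Initial and final vectors.} Let $u$ be the coordinate row of $|s_0\rangle\langle s_0|\otimes\rho_0$; it is supported on the $s_0$ block with entries $\langle B_a,\rho_0\rangle_{\mathrm{HS}}$. Let $v$ be the column vector with $v_{(s,a)}=\langle B_a,P_{\mathrm{acc}}\rangle_{\mathrm{HS}}$ for every $s$. Orthonormality gives $\mathrm{Tr}(P_{\mathrm{acc}}X_s)=\sum_a\langle B_a,P_{\mathrm{acc}}\rangle\langle B_a,X_s\rangle$, so the coordinate row of $X_{CQ}$ times $v$ equals $F_{\mathrm{acc}}(X_{CQ})$.

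\textbf{Conclusion.} By induction on $|w|$, $uA_{\sigma_1}\cdots A_{\sigma_m}$ is the coordinate row of $T_{\sigma_m}\circ\cdots\circ T_{\sigma_1}(|s_0\rangle\langle s_0|\otimes\rho_0)$. By the remark after Definition~\ref{def:1qcfa}, this operator equals $|s_m\rangle\langle s_m|\otimes\rho_m$. Applying $v$ then yields $\mathrm{Tr}(P_{\mathrm{acc}}\rho_m)=f_{\mathcal{A}}(w)$. The same argument handles an accepting effect $E_{\mathrm{acc}}$ in place of $P_{\mathrm{acc}}$.

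\textbf{Main obstacle.} The only delicate step is making sure the representation is genuinely real and of size $cq^2$ rather than $2cq^2$. This requires the Hermiticity-preservation of CPTP maps and a real orthonormal Hermitian basis. With the normalized off-diagonal elements, $\langle B_a,B_b\rangle=\delta_{ab}$ holds and is easy to verify. The ordering convention (row vectors, $T_\sigma$ acting on the right) must also match Definition~\ref{def:gfa}.
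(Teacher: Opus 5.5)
Your proposal is correct and takes the same route as the paper: coordinatize each classical block in a Hilbert--Schmidt orthonormal Hermitian basis, represent $T_\sigma$ and $F_{\mathrm{acc}}$ as real $cq^2$-dimensional matrices and vectors, and read off $f_{\mathcal{A}}(w)$ as $uA_{\sigma_1}\cdots A_{\sigma_m}v$. The differences are only cosmetic. You work directly in row convention instead of transposing a column-convention matrix $M_\sigma$, and you spell out the Hermiticity preservation of CP maps and the explicit block structure of $A_\sigma$, both of which the paper leaves implicit.
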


\begin{proof}
Fix an ordering $S=\{1,\dots,c\}$, and let $\mathcal{H}$ be the quantum register of $\mathcal{A}$.
Fix an orthonormal basis $B_0,\dots,B_{q^2-1}$ of $\mathrm{Herm}(\mathcal{H})$ with respect to the Hilbert--Schmidt inner product $\langle\cdot,\cdot\rangle_{\mathrm{HS}}$.
For each block $\rho_i$, define its coordinate vector
\[
 x(\rho_i)=\bigl(\mathrm{Tr}(B_0\rho_i),\dots,\mathrm{Tr}(B_{q^2-1}\rho_i)\bigr)^\top\in\mathbb{R}^{q^2}.
\]
Concatenating these block coordinates gives the state vector of the linearized system:
\[
 x(\rho_{CQ})=\bigl(x(\rho_1)^\top,\dots,x(\rho_c)^\top\bigr)^\top\in\mathbb{R}^{cq^2}.
\]
For a fixed symbol $\sigma$, the block transition $T_\sigma$ associated with Definition~\ref{def:1qcfa} is linear on the real vector space of block Hermitian operators.
Hence there exists a real matrix $M_\sigma\in\mathbb{R}^{cq^2\times cq^2}$ such that
\[
 x(T_\sigma(\rho_{CQ}))=M_\sigma\,x(\rho_{CQ}).
\]

The acceptance functional is likewise linear in the final classical--quantum state.
Indeed, if
\[
\rho_{CQ}=\sum_{i=1}^c |i\rangle\langle i|\otimes \rho_i,
\]
then the 1QCFA accepts with probability
\[
f_{\mathcal{A}}(\rho_{CQ})=\sum_{i=1}^c \mathrm{Tr}(P_{\mathrm{acc}}\rho_i).
\]
The next step is to write this blockwise trace functional in the coordinates fixed above.
Define
\[
 v_{\mathrm{acc}}=\bigl(\mathrm{Tr}(P_{\mathrm{acc}}B_0),\dots,\mathrm{Tr}(P_{\mathrm{acc}}B_{q^2-1})\bigr)^\top\in\mathbb{R}^{q^2},
\qquad
 v_{\mathrm{tot}}=\bigl(v_{\mathrm{acc}}^\top,\dots,v_{\mathrm{acc}}^\top\bigr)^\top\in\mathbb{R}^{cq^2}.
\]
For each block $\rho_i$, the equality
\[
\mathrm{Tr}(P_{\mathrm{acc}}\rho_i)=v_{\mathrm{acc}}^\top x(\rho_i)
\]
follows from the definition of the coordinate vector $x(\rho_i)$.
Summing over the $c$ blocks gives
\[
f_{\mathcal{A}}(w)=v_{\mathrm{tot}}^\top x(\rho_{CQ}(w)).
\]
Here $\rho_{CQ}(0)=|s_0\rangle\langle s_0|\otimes\rho_0$ is the initial block state.
Define the GFA $G_{\mathcal{A}}$ with initial row vector $u^\top=x(\rho_{CQ}(0))^\top$, transition matrices $A_\sigma=M_\sigma^\top$, and final column vector $v_{\mathrm{tot}}$.
A direct computation confirms $f_{G_{\mathcal{A}}}(w)=u A_w v_{\mathrm{tot}}=f_{\mathcal{A}}(w)$.
\end{proof}

\begin{proof}[Proof of Theorem~\ref{thm:hybrid-main}(a)]
Theorem~\ref{thm:cq-linearization} gives a GFA with $cq^2$ states and the same acceptance probability function.
Applying Proposition~\ref{prop:gfa-to-pfa} over any ordered subfield containing the linearized GFA entries and the cutpoint converts this GFA into a one-way $K$-PFA with at most $2cq^2+6$ states recognizing the same strict-cutpoint language.
\end{proof}

After the upper bound has identified $cq^2$ as the ambient dimension, the lower-bound half shows that this dimension is genuinely saturated by strict-cutpoint distinguishability.  The argument follows the prepare--test philosophy from the mixed-state setting~\cite{ChenWuQuadratic}, now adapted to deterministic classical control and block-diagonal effects.

The first ingredient is a channel that realizes any prescribed effect through a fixed projective measurement.

\begin{lemma}
\label{lem:effect-realization}
Let $\dim(\mathcal{H})=q\ge 2$, and let $E$ be an effect on $\mathcal{H}$.
Fix an orthonormal basis $\{|1\rangle,\dots,|q\rangle\}$ and let $P_{\mathrm{acc}}=|1\rangle\langle 1|$.
There exists a CPTP map $\Phi_E$ such that
\[
 \mathrm{Tr}(P_{\mathrm{acc}}\,\Phi_E(\rho))=\mathrm{Tr}(E\rho)
\quad\text{for all density operators }\rho.
\]
\end{lemma}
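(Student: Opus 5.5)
The construction is a measure-and-prepare channel built from the two-outcome measurement $\{E, I-E\}$. Because $0\preceq E\preceq I$, both $E$ and $I-E$ are positive semidefinite and sum to the identity, so they form a valid POVM. Define
\[
\Phi_E(\rho)=\mathrm{Tr}(E\rho)\,|1\rangle\langle 1| + \mathrm{Tr}\bigl((I-E)\rho\bigr)\,|2\rangle\langle 2|,
\]
which uses $q\ge 2$ to ensure that $|2\rangle$ exists and is orthogonal to $|1\rangle$. Then $\mathrm{Tr}(P_{\mathrm{acc}}\Phi_E(\rho))=\mathrm{Tr}(E\rho)\cdot 1+\mathrm{Tr}((I-E)\rho)\cdot 0=\mathrm{Tr}(E\rho)$, which is the required identity. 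It holds for all operators by linearity, and in particular for all density operators.

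It remains to check that $\Phi_E$ is CPTP. Trace preservation is immediate: $\mathrm{Tr}\,\Phi_E(\rho)=\mathrm{Tr}(E\rho)+\mathrm{Tr}((I-E)\rho)=\mathrm{Tr}(\rho)$. For complete positivity, exhibit Kraus operators explicitly. Let $\{|j\rangle\}$ be the fixed basis and set
\[
K_{1,j}=|1\rangle\langle j|E^{1/2}, \qquad K_{2,j}=|2\rangle\langle j|(I-E)^{1/2}, \qquad j=1,\dots,q.
\]
The square roots exist because $E$ and $I-E$ are positive semidefinite. A direct computation gives
\[
\sum_j K_{1,j}\rho K_{1,j}^\dagger=\sum_j \langle j|E^{1/2}\rho E^{1/2}|j\rangle\,|1\rangle\langle 1|=\mathrm{Tr}(E\rho)\,|1\rangle\langle 1|,
\]
and similarly the $K_{2,j}$ terms produce $\mathrm{Tr}((I-E)\rho)\,|2\rangle\langle 2|$. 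Thus $\Phi_E$ has a Kraus form and is completely positive. The completeness relation $\sum K^\dagger K = E+(I-E)=I$ gives trace preservation a second time.

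Alternatively, complete positivity follows from the general fact that any map of the form $\rho\mapsto\sum_k \mathrm{Tr}(M_k\rho)\sigma_k$ with $M_k\succeq 0$ and $\sigma_k\succeq 0$ is entanglement-breaking, hence completely positive. The Kraus computation above makes this self-contained.

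No step is a real obstacle. The only points needing care are the hypothesis $q\ge 2$, which supplies the orthogonal "reject" state $|2\rangle$, and confirming that the positivity of both $E$ and $I-E$ is exactly what the effect condition guarantees.
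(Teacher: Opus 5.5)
Your proof is correct and is essentially the paper's. Both use the measure-and-prepare channel $\rho\mapsto\mathrm{Tr}(E\rho)\,|1\rangle\langle 1|+\mathrm{Tr}((I-E)\rho)\,|2\rangle\langle 2|$, and your Kraus operators $|1\rangle\langle j|E^{1/2}$ and $|2\rangle\langle j|(I-E)^{1/2}$ become exactly the paper's $\sqrt{\lambda_i}\,|1\rangle\langle\psi_i|$ and $\sqrt{1-\lambda_i}\,|2\rangle\langle\psi_i|$ when the basis $\{|j\rangle\}$ is taken to be an eigenbasis of $E$.
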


\begin{proof}
Write the spectral decomposition $E=\sum_{i=1}^q \lambda_i |\psi_i\rangle\langle\psi_i|$ with $0\le \lambda_i\le 1$.
Define Kraus operators
\[
K_i=\sqrt{\lambda_i}\,|1\rangle\langle\psi_i|,
\qquad
L_i=\sqrt{1-\lambda_i}\,|2\rangle\langle\psi_i|.
\]
Then $\sum_i (K_i^{\dagger}K_i+L_i^{\dagger}L_i)=I$, so the map $\Phi_E(\rho)=\sum_i K_i\rho K_i^{\dagger}+\sum_i L_i\rho L_i^{\dagger}$ is CPTP\@.
A direct computation gives $\mathrm{Tr}(P_{\mathrm{acc}}\Phi_E(\rho))=\sum_i \mathrm{Tr}(K_i^{\dagger}K_i\rho)=\mathrm{Tr}(E\rho)$.
\end{proof}

\begin{proof}[Proof of Theorem~\ref{thm:hybrid-main}(b)]
Let $\mathcal{K}\cong\mathbb{C}^c$ denote the classical register, with computational basis $\{|1\rangle,\dots,|c\rangle\}$.
Let $\mathcal{H}\cong\mathbb{C}^q$ denote the quantum register, with computational basis $\{|1\rangle,\dots,|q\rangle\}$.
The block-diagonal representation on $\mathcal{K}\otimes\mathcal{H}$ is the one fixed in Section~\ref{subsec:automata-models}.

The construction starts with deterministic classical--quantum configurations that fill the affine dimension of the block-diagonal state space.
Let $\omega=I_q/q$, and let $H_1,\dots,H_{q^2-1}$ be a basis of the traceless Hermitian operators on $\mathcal{H}$.
Choose $\varepsilon>0$ small enough that
\[
\theta_0=\omega,
\qquad
\theta_k=\omega+\varepsilon H_k,\quad 1\le k\le q^2-1,
\]
are density operators.
Such an $\varepsilon$ exists because $\omega$ is positive definite and each $H_k$ is traceless.
The operators $\theta_0,\dots,\theta_{q^2-1}$ are linearly independent in $\mathrm{Herm}(\mathcal{H})$.
For $1\le i\le c$ and $0\le k\le q^2-1$, set
\[
\zeta_{i,k}=|i\rangle\langle i|\otimes\theta_k.
\]
These $cq^2$ deterministic classical--quantum configurations are linearly independent in the block-diagonal Hermitian space, because different classical blocks have orthogonal support and the $\theta_k$ are linearly independent.
Select any
\[
d=cq^2-1
\]
of them and enumerate the selected configurations as
\[
\zeta_\ell=|a_\ell\rangle\langle a_\ell|\otimes\theta_{b_\ell},
\qquad 1\le \ell\le d.
\]
Since all $\zeta_\ell$ have trace one and are linearly independent, they are affinely independent.

These configurations are next separated by auxiliary block-diagonal effects indexed by sign patterns.
For each sign vector $\eta\in\{\pm1\}^d$, the linear independence of $\zeta_1,\dots,\zeta_d$ implies that there is a block-diagonal Hermitian operator $X_\eta$ on $\mathcal{K}\otimes\mathcal{H}$ such that
\[
\mathrm{Tr}(X_\eta\zeta_\ell)=\eta_\ell,
\qquad 1\le \ell\le d.
\]
Indeed, one may solve these equations inside the span of $\{\zeta_\ell:1\le \ell\le d\}$ using the invertible Hilbert--Schmidt Gram matrix.
Let
\[
M=\max_{\eta\in\{\pm1\}^d}\|X_\eta\|_{\mathrm{op}}.
\]
Choose $t>0$ such that $tM<1/2$ and define
\[
E_\eta=\frac12 I_{cq}+tX_\eta.
\]
Then $0\preceq E_\eta\preceq I_{cq}$, so $E_\eta$ is an effect on $\mathcal{K}\otimes\mathcal{H}$.
Since $E_\eta$ is block diagonal in the classical basis, it decomposes as
\[
E_\eta=\sum_{i=1}^c |i\rangle\langle i|\otimes E_\eta^{(i)},
\]
where each $E_\eta^{(i)}$ is an effect on $\mathcal{H}$ and
\[
\mathrm{Tr}(E_\eta\rho_{CQ})=\sum_{i=1}^c \mathrm{Tr}(E_\eta^{(i)}\rho_i)
\]
for every block-diagonal $\rho_{CQ}=\sum_i |i\rangle\langle i|\otimes\rho_i$.

These auxiliary effects can now be implemented inside a single 1QCFA by symbol-dependent channels.
Let the alphabet be
\[
\Sigma=\{p_1,\dots,p_d\}\cup\{\tau_\eta:\eta\in\{\pm1\}^d\}.
\]
Define a $(c,q)$-1QCFA $\mathcal{A}_{c,q}$ with classical state set $\{1,\dots,c\}$, quantum register $\mathcal{H}$, initial classical state $1$, initial quantum state $\theta_0$, and accepting projector $P_{\mathrm{acc}}=|1\rangle\langle1|$ on $\mathcal{H}$.

For the prepare symbol $p_\ell$, set $\delta(i,p_\ell)=a_\ell$ for every classical state $i$.
Let $\Phi_{i,p_\ell}=R_{\theta_{b_\ell}}$ for every $i$, where $R_{\theta_{b_\ell}}(\rho)=\mathrm{Tr}(\rho)\theta_{b_\ell}$ is the CPTP replacement channel.
Thus reading $p_\ell$ prepares the deterministic configuration $\zeta_\ell$ from any current configuration.

For the test symbol $\tau_\eta$, set $\delta(i,\tau_\eta)=i$ for every $i$.
Set $\Phi_{i,\tau_\eta}=\Phi_{E_\eta^{(i)}}$, where $\Phi_{E_\eta^{(i)}}$ is the channel supplied by Lemma~\ref{lem:effect-realization}.
The final accepting projector is the fixed projector $P_{\mathrm{acc}}$.

It remains to verify that this automaton realizes all strict-cutpoint labelings of the prepared configurations.
Evaluating the two-letter word $p_\ell\tau_\eta$ gives
\[
 f_{\mathcal{A}_{c,q}}(p_\ell\tau_\eta)
=\mathrm{Tr}\!\left(P_{\mathrm{acc}}\Phi_{E_\eta^{(a_\ell)}}(\theta_{b_\ell})\right)
=\mathrm{Tr}(E_\eta\zeta_\ell)
=\frac12+t\eta_\ell.
\]
Hence
\[
 f_{\mathcal{A}_{c,q}}(p_\ell\tau_\eta)>\frac12
\quad\Longleftrightarrow\quad \eta_\ell=+1.
\]
Thus the $d$ prepared configurations are shattered by strict cutpoint $1/2$.

Now suppose that an $m$-state PFA recognizes the same language under some strict cutpoint $\mu$.
Let $\delta_\ell\in\Delta_{m-1}$ be the state distribution after reading $p_\ell$, and for each $\tau_\eta$ let $b_\eta\in[0,1]^m$ be the vector of conditional acceptance probabilities after reading $\tau_\eta$ and the end-marker.
Then $f_P(p_\ell\tau_\eta)=\delta_\ell^\top b_\eta$.
The shattering property implies that the family of affine halfspaces on the simplex,
\[
\mathcal{C}_\mu=\{\,\{x\in\Delta_{m-1}: x^\top b>\mu\}\, :\, b\in[0,1]^m\},
\]
shatters the $d$ points $\delta_1,\dots,\delta_d$.
By Proposition~\ref{prop:simplex-vc}, affine halfspaces on $\Delta_{m-1}$ have VC dimension $m$. Therefore $d\le m$, and hence $m\ge cq^2-1$.
\end{proof}

The 1QCFA simulation cost is thus governed by a direct dimension count: the classical controller contributes a direct-sum index over $c$ blocks, each carrying a $q^2$-dimensional quantum operator space, and this product $cq^2$ is both sufficient and necessary.

\subsection{Pure-state measure-once automata and measurement-orbit shattering}
\label{sec:pure}

The pure-state measure-once model presents a different geometric challenge.  Pure states are parametrized by the complex projective space $\mathbb{CP}^{n-1}$, which has real dimension only $2n-2$, far smaller than the $n^2-1$ affine dimensions of the full mixed-state space~\cite{Watrous2018,BengtssonZyczkowski2017}.  If the reachable-state geometry alone determined the lower bound, MO-1QFA would admit a subquadratic simulation.  The quadratic obstruction instead comes from the accepting measurement: the unitary orbit of a balanced projector has local dimension $\lfloor n^2/2\rfloor$, and this freedom can be converted into a strict-cutpoint shattering construction by an explicit matrix-exponential argument.

The main result of this section is the following theorem.

\begin{theorem}
\label{thm:mo-quadratic}
For every $n\ge 2$, there exists an $n$-dimensional MO-1QFA under strict cutpoint $1/2$ such that every equivalent one-way PFA has at least $\lfloor n^2/2\rfloor$ states.
\end{theorem}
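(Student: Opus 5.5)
The plan is to reuse the prepare--test shattering argument from the proof of Theorem~\ref{thm:hybrid-main}(b). Here the ``prepare'' step produces pure states, and the ``test'' step moves the accepting projector along its unitary orbit. Set $r=\lfloor n/2\rfloor$ and $d=2r(n-r)=\lfloor n^2/2\rfloor$, the real dimension of $\mathrm{Gr}(r,n)$. Fix $P_0$ as the projector onto $\mathrm{span}\{|1\rangle,\dots,|r\rangle\}$ and use it as $P_{\mathrm{acc}}$. For a word $p_\ell\tau_\eta$ with $U_{p_\ell}|\psi_0\rangle=|\psi_\ell\rangle$, the acceptance probability is
\[
f(p_\ell\tau_\eta)=\langle\psi_\ell|U_{\tau_\eta}^\dagger P_0U_{\tau_\eta}|\psi_\ell\rangle ,
\]
and $U_{\tau_\eta}^\dagger P_0U_{\tau_\eta}$ ranges over all of $\mathrm{Gr}(r,n)$. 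It therefore suffices to find $d$ unit vectors $\psi_1,\dots,\psi_d$ such that the evaluation map
\[
F:\mathrm{Gr}(r,n)\to\mathbb{R}^d,\qquad Q\mapsto\bigl(\langle\psi_\ell|Q|\psi_\ell\rangle\bigr)_{\ell=1}^d
\]
has image containing a neighbourhood of $(\tfrac12,\dots,\tfrac12)$. Then for each $\eta\in\{\pm1\}^d$ there is a $Q_\eta$ with $F(Q_\eta)=\tfrac12\mathbf{1}+t\eta$ for some small $t>0$.

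The key step is a local surjectivity statement proved with the inverse function theorem at $P_0$. Write vectors as $\psi=(a,b)$ with $a\in\mathbb{C}^r$ and $b\in\mathbb{C}^{n-r}$. Curves through $P_0$ have the form $e^{-isH}P_0e^{isH}$ with $H=\begin{pmatrix}0&B\\B^\dagger&0\end{pmatrix}$, where $B\in\mathbb{C}^{r\times(n-r)}$; these $B$ parametrize the $2r(n-r)$-dimensional tangent space. The derivative of $\langle\psi|Q|\psi\rangle$ at $s=0$ is $\langle\psi|i[P_0,H]|\psi\rangle$, and a short computation shows this equals $-2\,\mathrm{Im}(a^\dagger Bb)$ up to sign convention.

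I would choose the vectors
\[
\psi_{j,k,\phi}=\tfrac{1}{\sqrt2}\bigl(|j\rangle+e^{i\phi}|r+k\rangle\bigr),\qquad j\le r,\ k\le n-r,\ \phi\in\{0,\pi/2\}.
\]
Each satisfies $\langle\psi|P_0|\psi\rangle=\tfrac12$, and the corresponding derivative functionals are, up to constants, $\mathrm{Im}B_{jk}$ and $\mathrm{Re}B_{jk}$. These $d$ functionals form a basis of the real dual of the tangent space, so $dF_{P_0}$ is bijective. The inverse function theorem, applied in a chart of $\mathrm{Gr}(r,n)$ or directly to $B\mapsto F(e^{-iH(B)}P_0e^{iH(B)})$, then gives a neighbourhood of $\tfrac12\mathbf{1}$ inside the image of $F$. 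I expect this step to be the main obstacle. It requires getting the tangent parametrization and the real-linear independence exactly right, and checking that $\lfloor n/2\rfloor\lceil n/2\rceil\cdot 2=\lfloor n^2/2\rfloor$ for both parities of $n$.

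With this in hand, I would define the automaton as follows:
\begin{itemize}
\item the alphabet is $\{p_\ell\}\cup\{\tau_\eta\}$, and the initial state is $|\psi_0\rangle=|1\rangle$;
\item $U_{p_\ell}$ is any unitary sending $|1\rangle$ to $|\psi_\ell\rangle$;
\item $U_{\tau_\eta}$ is any unitary with $U_{\tau_\eta}^\dagger P_0U_{\tau_\eta}=Q_\eta$;
\item the cutpoint is $\tfrac12$.
\end{itemize}
Then $f(p_\ell\tau_\eta)>\tfrac12$ if and only if $\eta_\ell=+1$. Now suppose an $m$-state PFA recognizes the same language with strict cutpoint $\mu$. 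Its distributions $\delta_\ell$ after reading $p_\ell$ are shattered by the halfspaces $\{x:x^\top b_\eta>\mu\}$, exactly as in the last paragraph of the proof of Theorem~\ref{thm:hybrid-main}(b). Proposition~\ref{prop:simplex-vc} then gives $m\ge d=\lfloor n^2/2\rfloor$.
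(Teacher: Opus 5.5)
Your proposal is correct and follows the paper's proof essentially step for step. It uses the same balanced rank-$\lfloor n/2\rfloor$ projector $P_0$, the same test states $\tfrac{1}{\sqrt2}(|j\rangle+e^{i\phi}|r+k\rangle)$ with $\phi\in\{0,\pi/2\}$, the same nonsingular-differential computation (the paper's Lemma~\ref{lem:jacobian}), and the same VC-dimension finish via Proposition~\ref{prop:simplex-vc}.

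The only difference is how you pass from the nonsingular differential to actual projectors. You invoke the inverse function theorem to hit $\tfrac12\mathbf{1}+t\eta$ exactly, for some unspecified small $t$. The paper instead takes $U_\eta=\exp(tK_\eta)$ with $t=1/(4n^2)$ and explicitly bounds the second-order term (a uniform shift $(r-s)t^2$) and the cubic Taylor remainder. That buys concrete unitaries and an explicit margin.
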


Theorem~\ref{thm:mo-quadratic} gives the lower-bound half of the pure-state simulation picture.
Combining it with the mixed-state simulation theorem from~\cite{ChenWuQuadratic} yields the corresponding worst-case statement immediately.

\begin{corollary}
\label{cor:mo-worstcase}
Every $n$-dimensional MO-1QFA with strict cutpoint $\lambda$ admits exact probabilistic simulation by a one-way $K$-PFA with at most $2n^2+6$ states, where $K$ is any ordered subfield containing $\lambda$ and the entries of the mixed-state linearization.
Moreover, there exists an $n$-dimensional MO-1QFA for which every equivalent one-way PFA has at least $\lfloor n^2/2\rfloor$ states.
\end{corollary}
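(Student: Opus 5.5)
The corollary has two halves, and I would prove them separately; the second half is Theorem~\ref{thm:mo-quadratic} itself and needs nothing further.

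For the upper bound, I would view an $n$-MO-1QFA $\mathcal{Q}=(\mathcal{H},\Sigma,|\psi_0\rangle,\{U_\sigma\},P_{\mathrm{acc}})$ as a mixed-state automaton with initial density operator $\rho_0=|\psi_0\rangle\langle\psi_0|$ and unitary channels $\rho\mapsto U_\sigma\rho U_\sigma^\dagger$. Then $f_{\mathcal{Q}}(w)=\mathrm{Tr}(P_{\mathrm{acc}}\rho_w)$ with $\rho_w=|\psi_w\rangle\langle\psi_w|$. This is exactly the $c=1$ case of Definition~\ref{def:1qcfa}, with one classical state and trivial $\delta$.

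Theorem~\ref{thm:cq-linearization} with $c=1$, $q=n$ then gives an $n^2$-state GFA with the same acceptance function. Its coordinates are taken in a Hilbert--Schmidt orthonormal basis of $\mathrm{Herm}(\mathcal{H})$, so the transition matrices $M_\sigma$ are the real matrices of $X\mapsto U_\sigma XU_\sigma^\dagger$. This linearization is the one the corollary refers to, and it agrees with the mixed-state linearization of~\cite{ChenWuQuadratic}.

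Next, I would apply Proposition~\ref{prop:gfa-to-pfa} over any ordered subfield $K$ containing $\lambda$ and the GFA entries. This gives an alphabet-preserving $K$-PFA with at most $2n^2+6$ states recognizing $\{w:f_{\mathcal{Q}}(w)>\lambda\}$ with strict cutpoint $1/2$. That is the first assertion.

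The second assertion is the statement of Theorem~\ref{thm:mo-quadratic}, quoted directly. Since $\lfloor n^2/2\rfloor$ and $2n^2+6$ are both $\Theta(n^2)$, the worst-case cost is $\Theta(n^2)$.

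The only point needing care is bookkeeping: the field $K$ must contain the entries produced by the chosen basis. This is why the statement is phrased in terms of ``the entries of the mixed-state linearization'' rather than requiring the simulator to be rational. I expect no real obstacle beyond this.
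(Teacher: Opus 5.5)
Your proposal is correct and is essentially the paper's proof: you view the MO-1QFA as a mixed-state automaton, linearize it to an $n^2$-state GFA, convert via Proposition~\ref{prop:gfa-to-pfa} to a PFA with at most $2n^2+6$ states, and quote Theorem~\ref{thm:mo-quadratic} for the lower bound. The only difference is cosmetic: you get the GFA as the $c=1$ case of Theorem~\ref{thm:cq-linearization} (legitimate, since neither that theorem nor Definition~\ref{def:1qcfa} requires $c\ge 2$) instead of citing the mixed-state linearization of~\cite{ChenWuQuadratic}, which makes your argument self-contained, and your claim that the two linearizations coincide is not needed because the argument works for any linearization whose entries lie in $K$.
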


The proof of Theorem~\ref{thm:mo-quadratic} starts from the local geometry of a balanced accepting projector and its unitary orbit, then turns that geometry into an explicit shattering construction.  A rank-$r$ accepting projector supplies exactly $d=\lfloor n^2/2\rfloor$ local deformation parameters, and carefully chosen pure states read off these parameters directly.  Since every chosen state has acceptance probability exactly $1/2$ at the balanced reference projector, a small unitary conjugation can move each test state independently to either side of the strict cutpoint.  The construction begins by fixing the balanced reference projector and the local coordinates on its unitary orbit.

Fix $n\ge 2$ and set
\[
 r=\lfloor n/2\rfloor,
 \qquad
 s=\lceil n/2\rceil,
 \qquad
 d=2rs=\lfloor n^2/2\rfloor.
\]
Decompose
\[
\mathcal{H}=\mathcal{H}_0\oplus \mathcal{H}_1,
\qquad \dim(\mathcal{H}_0)=r,
\qquad \dim(\mathcal{H}_1)=s,
\]
and let $P_0$ be the projector onto $\mathcal{H}_0$.
Its unitary orbit
\[
\mathcal{O}=\{U^\dagger P_0 U: U\in U(n)\}
\]
is precisely the set of rank-$r$ projectors on $\mathcal{H}$, hence the complex Grassmannian $\mathrm{Gr}(r,n)$.
This is a smooth manifold of real dimension $2r(n-r)=2rs=d$~\cite{LeeSmoothManifolds}.
The number $d$ is therefore not incidental: it is exactly the local dimension of the measurement orbit available for the construction.
The tangent directions at $P_0$ can be described in concrete block-matrix form.

\begin{lemma}
\label{lem:tangent}
A Hermitian matrix $\Delta$ lies in $T_{P_0}\mathcal{O}$ if and only if, relative to the decomposition $\mathcal{H}_0\oplus\mathcal{H}_1$,
\[
\Delta=\begin{pmatrix}
0 & X\\
X^\dagger & 0
\end{pmatrix}
\]
for some complex $r\times s$ matrix $X$.
In particular, $\dim_{\mathbb{R}}T_{P_0}\mathcal{O}=2rs=d$.
\end{lemma}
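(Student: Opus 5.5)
To identify $T_{P_0}\mathcal{O}$, I would differentiate the orbit map $U\mapsto U^\dagger P_0U$ at $U=I$. Every smooth curve through the identity in $U(n)$ has the form $U(t)=e^{-itH}$ with $H$ Hermitian, up to first order. Differentiating $U(t)^\dagger P_0U(t)=e^{itH}P_0e^{-itH}$ at $t=0$ gives $\Delta=i[H,P_0]$. Since $\mathcal{O}$ is a smooth embedded submanifold (the Grassmannian) and the orbit map is a smooth surjective submersion onto it, the tangent space is exactly the image $\{i[H,P_0]:H\in\mathrm{Herm}(\mathcal{H})\}$. It then suffices to compute this image in block form.

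Write $H=\begin{pmatrix}A&B\\ B^\dagger&D\end{pmatrix}$ with $A,D$ Hermitian and $B$ an arbitrary complex $r\times s$ matrix, and $P_0=\begin{pmatrix}I_r&0\\0&0\end{pmatrix}$. A direct computation gives
\[
[H,P_0]=\begin{pmatrix}0&-B\\ B^\dagger&0\end{pmatrix},
\qquad
i[H,P_0]=\begin{pmatrix}0&-iB\\ iB^\dagger&0\end{pmatrix}.
\]
Setting $X=-iB$, we get $X^\dagger=iB^\dagger$. As $B$ ranges over all complex $r\times s$ matrices, so does $X$. Hence the image is exactly the set of off-diagonal Hermitian matrices of the stated form.

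For the dimension count, the map $X\mapsto\begin{pmatrix}0&X\\X^\dagger&0\end{pmatrix}$ is a real-linear injection from $\mathbb{C}^{r\times s}\cong\mathbb{R}^{2rs}$, so $\dim_{\mathbb{R}}T_{P_0}\mathcal{O}=2rs=d$. This agrees with the stated dimension $2r(n-r)$ of $\mathrm{Gr}(r,n)$, which serves as a consistency check.

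The only delicate step is justifying that the tangent space equals the image of the differential of the orbit map, rather than merely being contained in it. I would argue this in one of two ways. The first uses the homogeneous-space structure: the orbit map factors through $U(n)/(U(r)\times U(s))$, whose dimension is $n^2-r^2-s^2=2rs$, so its differential is surjective onto $T_{P_0}\mathcal{O}$. The second avoids quotient language. Any smooth curve $P(t)$ of projectors with $P(0)=P_0$ satisfies $P^2=P$, and differentiating gives $\dot P P_0+P_0\dot P=\dot P$. This forces the diagonal blocks of $\dot P$ to vanish, which yields the reverse inclusion directly. I would use this second, elementary argument, since it gives both inclusions together with the commutator computation.
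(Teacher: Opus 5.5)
Your proof is correct and follows the paper's route: you differentiate the orbit map $U\mapsto U^\dagger P_0U$ at the identity (your $i[H,P_0]$ is the paper's $[P_0,K]$ with $K=-iH$) and read off the off-diagonal block form. Your extra step of differentiating $P(t)^2=P(t)$ along curves in $\mathcal{O}$ supplies the inclusion $T_{P_0}\mathcal{O}\subseteq\{\text{off-diagonal Hermitian}\}$, which the paper leaves implicit; without it, that direction needs the submersion property or the known value $2r(n-r)$ of $\dim\mathrm{Gr}(r,n)$.
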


\begin{proof}
The orbit map $U\mapsto U^\dagger P_0U$ is smooth.
Differentiating at the identity along a skew-Hermitian direction $K$ gives $\Delta=[P_0,K]$.
Writing $K$ in block form and using $P_0=\mathrm{diag}(I_r,0)$ yields exactly the displayed description with an arbitrary off-diagonal block $X\in M_{r\times s}(\mathbb{C})$.
Conversely, every such $\Delta$ arises from a suitable choice of $K$.
\end{proof}

Lemma~\ref{lem:tangent} has a simple interpretation.
To first order, moving the projector does not come from unitary motion inside $\mathcal{H}_0$ or inside $\mathcal{H}_1$; those directions leave the range of $P_0$ unchanged.
Only the mixing between the two summands matters, and that mixing is encoded by the complex $r\times s$ matrix $X$.
Its real and imaginary parts provide the desired $2rs=d$ real coordinates.

To read off these $d$ real coordinates, fix the $d$ pure states
\[
|\psi_{a,b}^{R}\rangle=\frac{1}{\sqrt{2}}\bigl(|a\rangle+|r+b\rangle\bigr),
\qquad
|\psi_{a,b}^{I}\rangle=\frac{1}{\sqrt{2}}\bigl(|a\rangle+i|r+b\rangle\bigr),
\]
indexed by $(a,b)\in[r]\times[s]$.
Each satisfies $\langle\psi|P_0|\psi\rangle=1/2$.
These states are chosen because each one couples exactly one basis vector from $\mathcal{H}_0$ with one from $\mathcal{H}_1$.
At the reference projector $P_0$, they are perfectly balanced between acceptance and rejection.
Consequently, the first-order change in acceptance probability isolates the corresponding entry of the off-diagonal block $X$.
Define the evaluation map $\Phi:\mathcal{O}\to\mathbb{R}^{d}$ by collecting the corresponding acceptance probabilities:
\[
\Phi(E)=\Bigl(\langle \psi_{a,b}^{R}|E|\psi_{a,b}^{R}\rangle,\ \langle \psi_{a,b}^{I}|E|\psi_{a,b}^{I}\rangle\Bigr)_{(a,b)\in[r]\times[s]}.
\]

\begin{lemma}
\label{lem:jacobian}
The differential $d\Phi_{P_0}:T_{P_0}\mathcal{O}\to\mathbb{R}^{d}$ is an isomorphism.
Equivalently, the chosen test states extract the real part and the negative imaginary part of each off-diagonal entry independently.
\end{lemma}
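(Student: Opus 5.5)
Since $\Phi$ is the restriction to $\mathcal{O}$ of a linear map on $\mathrm{Herm}(\mathcal{H})$, namely $E\mapsto(\langle\psi|E|\psi\rangle)_\psi$, its differential at $P_0$ is that same linear map restricted to $T_{P_0}\mathcal{O}$. The plan is therefore to evaluate each test functional on a general tangent vector $\Delta$, in the block form given by Lemma~\ref{lem:tangent}, and read off the result entry by entry.

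For $|\psi\rangle=\frac{1}{\sqrt2}(|a\rangle+\alpha|r+b\rangle)$ with $|\alpha|=1$, the diagonal blocks of $\Delta$ vanish. Expanding $\langle\psi|\Delta|\psi\rangle$ therefore leaves only the cross terms
\[
\tfrac12\bigl(\alpha X_{ab}+\bar\alpha\,\overline{X_{ab}}\bigr)=\mathrm{Re}(\alpha X_{ab}).
\]
Taking $\alpha=1$ gives $\mathrm{Re}\,X_{ab}$. Taking $\alpha=i$ gives $\mathrm{Re}(iX_{ab})=-\mathrm{Im}\,X_{ab}$. Hence
\[
d\Phi_{P_0}(\Delta)=\bigl(\mathrm{Re}\,X_{ab},\,-\mathrm{Im}\,X_{ab}\bigr)_{(a,b)\in[r]\times[s]}.
\]
Only the single entry $X_{ab}$ contributes to the $(a,b)$ coordinates, because $\langle a|\Delta|r+b\rangle=X_{ab}$.

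By Lemma~\ref{lem:tangent}, the map $\Delta\mapsto X$ is a real-linear bijection $T_{P_0}\mathcal{O}\to M_{r\times s}(\mathbb{C})\cong\mathbb{R}^{2rs}$. The map $X\mapsto(\mathrm{Re}\,X_{ab},-\mathrm{Im}\,X_{ab})$ is also a real-linear bijection onto $\mathbb{R}^{d}$. Their composite $d\Phi_{P_0}$ is therefore injective between spaces of equal dimension $d$, and so it is an isomorphism.

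The only point that needs care is justifying that the differential is the restriction of the ambient linear map. This holds because $\mathcal{O}$ is an embedded submanifold of $\mathrm{Herm}(\mathcal{H})$ and $\Phi$ extends linearly to the ambient space. There is no genuine obstacle beyond keeping track of the conjugation conventions in the cross terms.
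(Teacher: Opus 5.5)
Your proposal is correct and follows the paper's proof: both evaluate the test functionals on the block form of a tangent vector from Lemma~\ref{lem:tangent}, obtain $\mathrm{Re}(X_{a,b})$ and $-\mathrm{Im}(X_{a,b})$, and conclude that $d\Phi_{P_0}$ is a bijection onto $\mathbb{R}^{d}$. You also state explicitly that the differential is the restriction of the ambient linear map, which the paper leaves implicit.
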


\begin{proof}
Let $\Delta\in T_{P_0}\mathcal{O}$ and write it as in Lemma~\ref{lem:tangent} with off-diagonal block $X$.
Then, for each $(a,b)$,
\[
\langle \psi_{a,b}^{R}|\Delta|\psi_{a,b}^{R}\rangle=\mathrm{Re}(X_{a,b}),
\qquad
\langle \psi_{a,b}^{I}|\Delta|\psi_{a,b}^{I}\rangle=-\mathrm{Im}(X_{a,b}).
\]
Thus $d\Phi_{P_0}$ records all $2rs=d$ real coordinates of $X$, and is therefore bijective.
\end{proof}

Lemma~\ref{lem:jacobian} shows that, near $P_0$, the measurement orbit can be coordinatized by these acceptance probabilities.
In particular, prescribing the signs of the first-order changes in those probabilities is equivalent to prescribing the signs of all local coordinates.
The remaining task is therefore constructive: for each sign pattern, an explicit nearby conjugate projector is built whose acceptance probabilities move to the required side of $1/2$.  With these local coordinates in place, we can prove the theorem by an explicit shattering construction.

\begin{proof}[Proof of Theorem~\ref{thm:mo-quadratic}]
Use the notation fixed above, and choose any fixed ordering of the $d$ prepared states
\[
|\psi_{a,b}^{R}\rangle,
\qquad
|\psi_{a,b}^{I}\rangle
\qquad ((a,b)\in[r]\times[s]).
\]
Denote the resulting list by $(|\psi_j\rangle)_{j=1}^{d}$.
For each sign vector
\[
\eta=\bigl(\eta_{a,b}^{R},\eta_{a,b}^{I}\bigr)_{(a,b)\in[r]\times[s]}\in\{\pm1\}^{d},
\]
write $\eta_j$ for the corresponding sign in the same ordering, and define the complex $r\times s$ matrix
\[
(X_{\eta})_{a,b}=\eta_{a,b}^{R}-i\eta_{a,b}^{I}
\qquad ((a,b)\in[r]\times[s]),
\]
and the skew-Hermitian generator
\[
K_{\eta}=\begin{pmatrix}
0 & X_{\eta}\\
- X_{\eta}^{\dagger} & 0
\end{pmatrix}.
\]
Set $t=1/(4n^2)$ and write $U_{\eta}=\exp(tK_{\eta})$.

For $u\ge 0$, let $G_{\eta}(u)=e^{-uK_{\eta}}P_0e^{uK_{\eta}}$.
Taylor expansion with integral remainder, applied to the operator-valued curve $u\mapsto G_{\eta}(u)$ and with $\mathrm{ad}_{K_{\eta}}(X)=[X,K_{\eta}]$, gives
\[
G_{\eta}(t)=P_0+t[P_0,K_{\eta}]+\frac{t^2}{2}[[P_0,K_{\eta}],K_{\eta}]+R_{\eta}(t),
\]
where
\[
R_{\eta}(t)=\int_0^t \frac{(t-u)^2}{2}\,\mathrm{ad}_{K_{\eta}}^3\bigl(G_{\eta}(u)\bigr)\,du.
\]
Since $\|[A,K_{\eta}]\|\le 2\|A\|\,\|K_{\eta}\|$ for the operator norm, one has $\|\mathrm{ad}_{K_{\eta}}\|\le 2\|K_{\eta}\|$ on $\mathcal{L}(\mathcal{H})$.
Moreover, $\|G_{\eta}(u)\|=\|P_0\|=1$ for all $u$, and therefore
\[
\|R_{\eta}(t)\|\le \frac{4}{3}t^3\|K_{\eta}\|^3.
\]
Every entry of $X_{\eta}$ has modulus $\sqrt{2}$.
Also,
\[
K_{\eta}^{\dagger}K_{\eta}=\begin{pmatrix} X_{\eta}X_{\eta}^{\dagger} & 0\\ 0 & X_{\eta}^{\dagger}X_{\eta}\end{pmatrix},
\]
so $\|K_{\eta}\|=\|X_{\eta}\|\le \|X_{\eta}\|_{\mathrm{F}}=\sqrt{2rs}\le n/\sqrt{2}$.
In particular,
\[
\|K_{\eta}\|\le n,
\qquad
\|R_{\eta}(t)\|\le \frac{1}{48n^3}.
\]
This bound is uniform in $\eta$ and $n$, and its lack of sharpness is harmless because the linear term is of order $n^{-2}$ whereas the remainder is only of order $n^{-3}$.

The first two commutators have the explicit block form
\[
[P_0,K_{\eta}]=\begin{pmatrix} 0 & X_{\eta}\\ X_{\eta}^\dagger & 0\end{pmatrix},
\qquad
[[P_0,K_{\eta}],K_{\eta}]=\begin{pmatrix} -2X_{\eta}X_{\eta}^\dagger & 0\\ 0 & 2X_{\eta}^\dagger X_{\eta}\end{pmatrix}.
\]
Evaluating the linear term on the prepared states yields
\[
\langle \psi_{a,b}^{R}|[P_0,K_{\eta}]|\psi_{a,b}^{R}\rangle=\mathrm{Re}\bigl((X_{\eta})_{a,b}\bigr)=\eta_{a,b}^{R},
\]
\[
\langle \psi_{a,b}^{I}|[P_0,K_{\eta}]|\psi_{a,b}^{I}\rangle=-\mathrm{Im}\bigl((X_{\eta})_{a,b}\bigr)=\eta_{a,b}^{I}.
\]
For the quadratic term, the constant modulus $|(X_{\eta})_{a,b}|^2=2$ implies
\[
(X_{\eta}X_{\eta}^{\dagger})_{a,a}=2s,
\qquad
(X_{\eta}^{\dagger}X_{\eta})_{b,b}=2r,
\]
so for both state types,
\[
\left\langle \psi\,\middle|\,\frac{1}{2}[[P_0,K_{\eta}],K_{\eta}]\,\middle|\,\psi\right\rangle=r-s,
\qquad
\psi\in\{|\psi_{a,b}^{R}\rangle,|\psi_{a,b}^{I}\rangle\}.
\]
Thus the second-order contribution is a uniform constant shift, independent of both $\eta$ and the tested state.

Now construct an MO-1QFA $Q_n$ with any fixed initial unit state $|\varphi_0\rangle\in\mathcal{H}$ and accepting projector $P_{\mathrm{acc}}=P_0$.
Its alphabet contains prepare symbols $p_j$ and test symbols $\tau_{\eta}$.
Each $p_j$ applies a unitary $V_j$ satisfying $V_j|\varphi_0\rangle=|\psi_j\rangle$, and each $\tau_{\eta}$ applies $U_{\eta}$.
For the length-$2$ word $p_j\tau_{\eta}$,
\[
f_{Q_n}(p_j\tau_{\eta})=\langle \psi_j|U_{\eta}^{\dagger}P_0U_{\eta}|\psi_j\rangle
=\frac12+t\eta_j+(r-s)t^2+\delta_{j,\eta},
\qquad
|\delta_{j,\eta}|\le \frac{1}{48n^3}.
\]
Because $|r-s|\le 1$ and $t=1/(4n^2)$,
\[
t-|r-s|t^2-\frac{1}{48n^3}
\ge t-\frac{t}{16}-\frac{t}{24}
=\frac{43}{48}t>0.
\]
Therefore,
\[
f_{Q_n}(p_j\tau_{\eta})>\frac12
\quad\Longleftrightarrow\quad
\eta_j=+1.
\]
The family of tests $\{\tau_{\eta}\}_{\eta\in\{\pm1\}^{d}}$ thus shatters the $d$ prepared prefixes $\{p_j\}_{j=1}^{d}$.

Now let $P$ be an $m$-state one-way PFA recognizing the same language under some strict cutpoint $\mu$.
For each $j$, let $\delta_j\in\Delta_{m-1}$ be the state distribution of $P$ after reading $p_j$.
For each $\eta$, let $b_{\eta}\in[0,1]^m$ be the conditional acceptance vector after reading $\tau_{\eta}$ and the end-marker.
Then
\[
f_P(p_j\tau_{\eta})=\delta_j^\top b_{\eta}.
\]
Since $P$ recognizes the same language, the affine halfspaces
\[
\{x\in\Delta_{m-1}: x^\top b_{\eta}>\mu\}
\qquad (\eta\in\{\pm1\}^{d})
\]
shatter the points $\delta_1,\dots,\delta_d$.
By Proposition~\ref{prop:simplex-vc}, this forces $d\le m$.
Hence $m\ge d=\lfloor n^2/2\rfloor$.
\end{proof}

\begin{proof}[Proof of Corollary~\ref{cor:mo-worstcase}]
The upper bound follows by viewing an MO-1QFA as a special case of a mixed-state one-way general quantum finite automaton on an $n$-dimensional Hilbert space.
The mixed-state linearization from~\cite{ChenWuQuadratic} yields an $n^2$-state GFA, which Proposition~\ref{prop:gfa-to-pfa} converts over the corresponding ordered coefficient field to a $K$-PFA with at most $2n^2+6$ states.
The lower bound is Theorem~\ref{thm:mo-quadratic}.
\end{proof}

\begin{remark}
\label{rem:pure-vs-mixed}
The pure-state restriction does not reduce the worst-case order of exact probabilistic simulation.
Even for MO-1QFA, one still needs a PFA of quadratic size in the quantum dimension, matching the order already seen for 1gQFA and 1QCFA\@.
What changes is the geometric source of the lower bound.
For mixed-state models, the quadratic obstruction comes from the ambient affine state space of density operators itself.
For MO-1QFA, by contrast, the reachable pure-state set has affine dimension only $O(n)$; the missing quadratic freedom is recovered from the local geometry of the measurement orbit $\mathrm{Gr}(\lfloor n/2\rfloor,n)$, whose tangent directions support $\lfloor n^2/2\rfloor$ independent threshold directions.
\end{remark}

The two subsections above establish the exact one-way simulation costs by the same prepare--test principle, but the geometric source of the quadratic dimension differs in a revealing way.  For 1QCFA, the prepared classical--quantum configurations themselves fill the block-diagonal operator space, so the lower bound comes from the state geometry.  For MO-1QFA, the prepared pure states lie on a manifold of only linear dimension; the missing quadratic freedom is supplied by the accepting projector, whose unitary orbit provides $\lfloor n^2/2\rfloor$ independent test directions.  In both cases, the common output is a family of finite prefix--suffix tests that shatters a set of prepared objects.  The next section extracts this finite shadow through sign-rank and calibrates it against Forster's spectral method.

\section{Forster's spectral method and finite sign-rank witnesses}
\label{sec:signrank}

The prepare--test constructions of Section~\ref{sec:prepare-test} are geometric and continuous, yet their obstruction to probabilistic simulation is already captured by finite combinatorial data.  Once finite sets of prefixes and suffixes are fixed, a strict-cutpoint automaton determines a \emph{sign matrix} whose entries record which concatenations lie above the cutpoint.  Sign-rank is the natural invariant of this matrix: every $m$-state probabilistic simulator gives a rank-$m$ real realization of the same sign pattern after the cutpoint is subtracted, so a high sign-rank forces a large simulator.

Forster's spectral method is a standard technique for proving sign-rank lower bounds on explicit square sign matrices~\cite{Forster2002}.  In its basic form, it asserts that every matrix $S\in\{\pm1\}^{L\times L}$ satisfies
\[
\mathrm{rank}_{\pm}(S)\ge \frac{L}{\|S\|_2},
\]
where $\|S\|_2$ is the spectral norm.  The certificate is especially effective when a square restriction has unusually small spectral norm---as in Hadamard-type patterns---because the lower bound is then obtained from a directly computable quantity.

The finite witness produced by prepare--test, however, has a different shape.  It is the \emph{complete shattering matrix}: a $d\times 2^d$ rectangular matrix whose columns realize every sign pattern on a $d$-element prepared set.  Its sign-rank is exactly $d$, and this exact value uses the full family of tests rather than a single square spectral restriction.  The two approaches are therefore complementary: the square Forster certificate gives a useful analytic lower-bound tool, while the complete shattering geometry achieves the quadratic lower bound at its natural dimension.  The first subsection records the finite sign-rank witness; the second subsection calibrates the scale at which the basic square spectral method operates.

\subsection{Finite sign-rank witnesses from prepare--test}
\label{subsec:signrank-general}

Fix finite sets $X,Y\subseteq\Sigma^\ast$.
Given a machine $A$ and a strict cutpoint $\lambda$, define the induced sign matrix
\[
S^{A,\lambda}_{x,y}=\mathrm{sgn}(f_A(xy)-\lambda)\in\{\pm1\},
\qquad x\in X,\ y\in Y,
\]
where $\mathrm{sgn}(t)=+1$ for $t>0$ and $\mathrm{sgn}(t)=-1$ for $t\le 0$.

\begin{theorem}
\label{thm:signrank-mech}
Let $A$ recognize a language $L$ with strict cutpoint $\lambda$.
Assume that an $m$-state one-way PFA $P$ recognizes the same language $L$ with strict cutpoint $\mu$.
Then, for every finite $X,Y\subseteq\Sigma^\ast$,
\[
m\ge \mathrm{rank}_{\pm}(S^{A,\lambda}).
\]
\end{theorem}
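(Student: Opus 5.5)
The plan is to exhibit an explicit real matrix of rank at most $m$ whose sign pattern matches $S^{A,\lambda}$ entrywise. Since $A$ and $P$ recognize the same language $L$, for every $x\in X$ and $y\in Y$ we have $f_A(xy)>\lambda$ if and only if $xy\in L$, which holds if and only if $f_P(xy)>\mu$. Hence $S^{A,\lambda}_{x,y}=\mathrm{sgn}(f_P(xy)-\mu)$, and it suffices to bound the sign-rank of the sign matrix induced by $P$ and $\mu$.

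Next, factor $f_P(xy)$ exactly as in the proofs of Theorem~\ref{thm:hybrid-main}(b) and Theorem~\ref{thm:mo-quadratic}. For $x=\sigma_1\cdots\sigma_k$, let $\delta_x=\pi P_{\sigma_1}\cdots P_{\sigma_k}\in\Delta_{m-1}$. For $y=\tau_1\cdots\tau_l$, let $b_y=P_{\tau_1}\cdots P_{\tau_l}P_\#\mathbf{1}_F\in[0,1]^m$. Then $f_P(xy)=\delta_x b_y$. Because $\delta_x$ is a probability vector, $\mu=\delta_x(\mu\mathbf{1})$, so
\[
f_P(xy)-\mu=\delta_x\,(b_y-\mu\mathbf{1}).
\]
Define $R_{x,y}=\delta_x(b_y-\mu\mathbf{1})$. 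Then $R=DB$, where $D\in\mathbb{R}^{X\times m}$ has rows $\delta_x$ and $B\in\mathbb{R}^{m\times Y}$ has columns $b_y-\mu\mathbf{1}$. Consequently $\mathrm{rank}(R)\le m$. This is where the simplex normalization is used: it absorbs the affine cutpoint without adding a rank.

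The main obstacle is strictness of signs. Definition~\ref{def:signrank} requires $S_{x,y}R_{x,y}>0$, but an entry with $R_{x,y}=0$ has sign $-1$ and would fail this. The plan handles this by perturbing the cutpoint. Since $X\times Y$ is finite, the values $f_P(xy)$ that are $\le\mu$ form a finite set, so there is $\varepsilon>0$ with $f_P(xy)\le\mu$ implying $f_P(xy)<\mu+\varepsilon$, and also with $f_P(xy)>\mu+\varepsilon$ whenever $f_P(xy)>\mu$. Choose $\varepsilon$ smaller than the minimum of $f_P(xy)-\mu$ over the positive differences. Replace $\mu$ by $\mu'=\mu+\varepsilon/2$ in the definition of $B$. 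Then every entry of $R'=D(B-\tfrac{\varepsilon}{2}\mathbf{1}\mathbf{1}^\top)$ is nonzero and has exactly the sign $S^{A,\lambda}_{x,y}$. The rank of $R'$ is still at most $m$, which gives $m\ge\mathrm{rank}_{\pm}(S^{A,\lambda})$.
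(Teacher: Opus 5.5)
Your proof is correct and follows the paper's argument essentially step for step. You factor $f_P(xy)=\delta_x b_y$ through the state distribution and the conditional acceptance vector, absorb the cutpoint using $\delta_x\mathbf{1}=1$, and shift $\mu$ by less than the smallest positive gap so that all signs become strict; using $\mu+\varepsilon/2$ instead of the paper's $\mu+\varepsilon$ makes no difference, and the case with no positive gaps, which you leave implicit, is trivial since then any $\varepsilon>0$ works.
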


\begin{proof}
For each $x\in X$, let $a_x\in\Delta_{m-1}$ be the state distribution of $P$ after reading $x$.
For each $y\in Y$, let $b_y\in[0,1]^m$ be the column vector of conditional acceptance probabilities after reading $y$ and the end-marker.
Then
\[
f_P(xy)=a_x^\top b_y.
\]
Since $X\times Y$ is finite, choose $\varepsilon>0$ smaller than every positive gap $f_P(xy)-\mu$ occurring on this finite set.  If there is no positive gap, choose any $\varepsilon>0$.
Put $\mu'=\mu+\varepsilon$ and
\[
c_y=b_y-\mu'{\bf 1}\in\mathbb{R}^m,
\]
where ${\bf 1}$ is the all-one vector.
The real matrix $R$ with entries
\[
R_{x,y}=a_x^\top c_y=a_x^\top b_y-\mu'
\]
has rank at most $m$.
Because each $a_x$ is a probability vector, the cutpoint subtraction is absorbed into the same $m$ coordinates rather than requiring an additional affine coordinate.
Moreover, by the choice of $\varepsilon$, the shift from $\mu$ to $\mu'$ preserves the strict signs on the finite set $X\times Y$:
\[
R_{x,y}>0
\iff
f_P(xy)>\mu
\iff
xy\in L
\iff
f_A(xy)>\lambda.
\]
Thus $S^{A,\lambda}_{x,y}R_{x,y}>0$ for every $(x,y)$, and therefore $\mathrm{rank}_{\pm}(S^{A,\lambda})\le m$.
\end{proof}

The finite matrices arising from the prepare--test proofs have a particularly simple universal form.  For $d\ge 1$, define the complete shattering matrix
\[
\mathsf{C}_d\in\{\pm1\}^{[d]\times\{\pm1\}^d},
\qquad
(\mathsf{C}_d)_{j,\eta}=\eta_j.
\]
Its columns list all sign patterns on $d$ prepared objects.

\begin{lemma}
\label{lem:complete-shattering-signrank}
For every $d\ge 1$,
\[
\mathrm{rank}_{\pm}(\mathsf{C}_d)=d.
\]
\end{lemma}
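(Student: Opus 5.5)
The plan is to prove the two inequalities $\mathrm{rank}_{\pm}(\mathsf{C}_d)\le d$ and $\mathrm{rank}_{\pm}(\mathsf{C}_d)\ge d$ separately.

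\emph{Upper bound.} The matrix $\mathsf{C}_d$ is itself a real matrix realizing its own sign pattern. Each entry $(\mathsf{C}_d)_{j,\eta}=\eta_j$ is nonzero and satisfies $(\mathsf{C}_d)_{j,\eta}\cdot\eta_j=1>0$. Its rank is at most the number of rows, which is $d$. Concretely, we can factor $\mathsf{C}_d=I_d\cdot[\eta]_{\eta}$ with inner dimension $d$. Hence $\mathrm{rank}_{\pm}(\mathsf{C}_d)\le d$.

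\emph{Lower bound.} Suppose a real matrix $R$ of rank $k$ satisfies $\eta_j R_{j,\eta}>0$ for all $j$ and $\eta$. Factor it as $R_{j,\eta}=a_j^\top c_\eta$ with $a_j,c_\eta\in\mathbb{R}^k$. For each subset $Z\subseteq[d]$, take $\eta$ to be the sign vector with $\eta_j=+1$ exactly when $j\in Z$. Then $\{x: x^\top c_\eta>0\}$ contains $a_j$ exactly when $j\in Z$. So the homogeneous halfspaces in $\mathbb{R}^k$ shatter the $d$ points $a_1,\dots,a_d$.

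These points must be pairwise distinct, since distinct indices receive different labels under a suitable $\eta$. The step needing care is the VC bound. I would give the direct linear-algebra argument rather than reduce to Proposition~\ref{prop:simplex-vc}, which concerns the simplex.

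Assume $d>k$. Then the vectors $a_1,\dots,a_d$ are linearly dependent, so there is a nonzero $\alpha\in\mathbb{R}^d$ with $\sum_j\alpha_j a_j=0$. Replacing $\alpha$ by $-\alpha$ if needed, we may assume some $\alpha_j>0$.

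Choose $\eta$ with $\eta_j=+1$ when $\alpha_j>0$ and $\eta_j=-1$ otherwise. Then $\alpha_j R_{j,\eta}\ge 0$ for every $j$. Indeed, when $\alpha_j>0$ we have $R_{j,\eta}>0$, and when $\alpha_j\le 0$ we have $R_{j,\eta}<0$. The inequality is strict for the indices with $\alpha_j>0$.

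Summing over $j$ gives
\[
0<\sum_j\alpha_j R_{j,\eta}=\Bigl(\sum_j\alpha_j a_j\Bigr)^\top c_\eta=0,
\]
a contradiction. Hence $d\le k$, and so $\mathrm{rank}_{\pm}(\mathsf{C}_d)\ge d$.

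The only delicate point is the sign choice for the indices with $\alpha_j=0$. Those indices contribute zero to the sum regardless of their sign, so assigning them $-1$ is harmless. Combining the two bounds gives $\mathrm{rank}_{\pm}(\mathsf{C}_d)=d$.
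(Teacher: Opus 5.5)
Your proof is correct and is essentially the paper's argument: your dependency vector $\alpha$ with $\sum_j\alpha_j a_j=0$ is a nonzero vector orthogonal to the column space of $R$ (the paper's $v\in W^\perp$), and both proofs reach a contradiction by pairing it with the column $R_{\ast,\eta}$ whose sign pattern matches that vector. The shattering and distinctness framing and the sign flip of $\alpha$ are not actually needed, since $\alpha_j R_{j,\eta}>0$ already holds at every index with $\alpha_j\ne 0$.
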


\begin{proof}
The upper bound $\mathrm{rank}_{\pm}(\mathsf{C}_d)\le d$ follows from $\mathsf{C}_d$ itself, whose row rank is $d$.  For the reverse inequality, let $R$ be any real matrix with the same sign pattern as $\mathsf{C}_d$, and let $W\subseteq\mathbb{R}^d$ be the column space of $R$.
For every $\eta\in\{\pm1\}^d$, the column $R_{\ast,\eta}$ lies in the open orthant
\[
O_\eta=\{z\in\mathbb{R}^d:\eta_j z_j>0\text{ for all }j\in[d]\}.
\]
Hence $W$ intersects every open orthant.
If $W$ were a proper subspace, choose a nonzero vector $v\in W^\perp$.
Choose $\eta_j=+1$ when $v_j=0$ and $\eta_j=\mathrm{sgn}(v_j)$ when $v_j\ne0$.
Then $v^\top z>0$ for every $z\in O_\eta$, contradicting $z\in W\subseteq v^\perp$.
Therefore $W=\mathbb{R}^d$, so $\mathrm{rank}(R) = d$.
\end{proof}

The lower-bound constructions in Section~\ref{sec:prepare-test} realize exactly these complete shattering matrices.
Thus the VC-dimension proofs there can be repackaged as sign-rank lower bounds without changing the automata or the cutpoint.

\begin{proposition}
\label{prop:section3-signrank-form}
The two lower-bound constructions of Section~\ref{sec:prepare-test} have the following finite sign-rank form.
\begin{enumerate}[label=(\alph*),nosep]
\item For $c\ge2$ and $q\ge2$, the $(c,q)$-1QCFA constructed in the proof of Theorem~\ref{thm:hybrid-main}(b) has finite prefix and suffix sets whose induced sign matrix is $\mathsf{C}_{cq^2-1}$.  Consequently every equivalent one-way PFA has at least $cq^2-1$ states.
\item For $n\ge2$, the $n$-MO-1QFA constructed in the proof of Theorem~\ref{thm:mo-quadratic} has finite prefix and suffix sets whose induced sign matrix is $\mathsf{C}_{\lfloor n^2/2\rfloor}$.  Consequently every equivalent one-way PFA has at least $\lfloor n^2/2\rfloor$ states.
\end{enumerate}
\end{proposition}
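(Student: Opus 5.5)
The plan is to read the prefix and suffix sets directly off the two constructions in Section~\ref{sec:prepare-test}, check that the induced sign matrix is literally $\mathsf{C}_d$ up to relabeling, and then combine Theorem~\ref{thm:signrank-mech} with Lemma~\ref{lem:complete-shattering-signrank}.

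For part (a), take $d=cq^2-1$ and $X=\{p_1,\dots,p_d\}$. The suffix set is $Y=\{\tau_\eta:\eta\in\{\pm1\}^d\}$, where the $p_\ell$, $\tau_\eta$ are the letters of $\mathcal{A}_{c,q}$, and we index the rows and columns by $\ell\in[d]$ and $\eta\in\{\pm1\}^d$. The proof of Theorem~\ref{thm:hybrid-main}(b) computed
\[
f_{\mathcal{A}_{c,q}}(p_\ell\tau_\eta)=\tfrac12+t\eta_\ell .
\]
This value is never equal to $1/2$, since $t>0$. Hence
\[
S^{\mathcal{A}_{c,q},1/2}_{p_\ell,\tau_\eta}=\mathrm{sgn}(t\eta_\ell)=\eta_\ell=(\mathsf{C}_d)_{\ell,\eta} .
\]
One subtlety needs care: the sign convention sends $0$ to $-1$. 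We must therefore confirm that the gap is strictly nonzero, and the explicit formula gives this. We also need the $\tau_\eta$ to be pairwise distinct letters and the $p_\ell$ to be distinct, so that $X$ and $Y$ biject with $[d]$ and $\{\pm1\}^d$. Both hold by the definition of the alphabet.

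For part (b), take $d=\lfloor n^2/2\rfloor$, $X=\{p_j\}_{j=1}^d$ and $Y=\{\tau_\eta\}$ from the automaton $Q_n$. The proof of Theorem~\ref{thm:mo-quadratic} gives
\[
f_{Q_n}(p_j\tau_\eta)-\tfrac12=t\eta_j+(r-s)t^2+\delta_{j,\eta}.
\]
It also gives a lower bound of $\tfrac{43}{48}t>0$ on $t-|r-s|t^2-\tfrac{1}{48n^3}$. Consequently the difference is strictly positive when $\eta_j=+1$ and strictly negative when $\eta_j=-1$. So again the sign is $\eta_j$, and the induced matrix is $\mathsf{C}_d$.

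For the consequences, suppose an $m$-state PFA recognizes the same language under some strict cutpoint $\mu$. Theorem~\ref{thm:signrank-mech}, applied to these finite $X,Y$, yields $m\ge\mathrm{rank}_{\pm}(\mathsf{C}_d)$. Lemma~\ref{lem:complete-shattering-signrank} gives $\mathrm{rank}_{\pm}(\mathsf{C}_d)=d$, so $m\ge cq^2-1$ in case (a) and $m\ge\lfloor n^2/2\rfloor$ in case (b). Sign-rank is invariant under permuting rows and columns, so the chosen orderings are immaterial.

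The only step requiring attention is the strictness check in case (b): the sign matrix must equal $\mathsf{C}_d$ exactly, with no boundary entries. This already follows from the quantitative remainder estimate established earlier, so I expect no genuine obstacle. The argument is essentially bookkeeping that repackages the VC-dimension arguments.
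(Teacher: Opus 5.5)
Your proposal is correct and follows the paper's proof essentially verbatim. You read the same prefix set $\{p_1,\dots,p_d\}$ and suffix set $\{\tau_\eta\}$ off each construction, use the computed acceptance values to identify the induced sign matrix as $\mathsf{C}_d$, and conclude $m\ge d$ from Theorem~\ref{thm:signrank-mech} and Lemma~\ref{lem:complete-shattering-signrank}. Your extra checks (nonvanishing gaps, distinct letters) are harmless; in case (b), under the convention $\mathrm{sgn}(0)=-1$, the equivalence $f_{Q_n}(p_j\tau_\eta)>\tfrac12\iff\eta_j=+1$ alone already suffices.
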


\begin{proof}
For the 1QCFA construction, put $d=cq^2-1$ and use the prefix set
\[
X=\{p_1,\dots,p_d\}
\]
and the suffix set
\[
Y=\{\tau_\eta:\eta\in\{\pm1\}^d\}.
\]
The computation in the proof of Theorem~\ref{thm:hybrid-main}(b) gives
\[
f_{\mathcal{A}_{c,q}}(p_\ell\tau_\eta)=\frac12+t\eta_\ell.
\]
Hence $S^{\mathcal{A}_{c,q},1/2}_{p_\ell,\tau_\eta}=\eta_\ell$, so the induced sign matrix is $\mathsf{C}_d$.
Lemma~\ref{lem:complete-shattering-signrank} and Theorem~\ref{thm:signrank-mech} give the lower bound $m\ge d$ for every equivalent one-way PFA.

For the MO-1QFA construction, put $d=\lfloor n^2/2\rfloor$ and use the prefix set
\[
X=\{p_1,\dots,p_d\}
\]
and the suffix set
\[
Y=\{\tau_\eta:\eta\in\{\pm1\}^d\}.
\]
The proof of Theorem~\ref{thm:mo-quadratic} establishes
\[
f_{Q_n}(p_j\tau_\eta)>\frac12
\quad\Longleftrightarrow\quad
\eta_j=+1.
\]
Therefore $S^{Q_n,1/2}_{p_j,\tau_\eta}=\eta_j$, and the induced sign matrix is $\mathsf{C}_d$.
Again Lemma~\ref{lem:complete-shattering-signrank} and Theorem~\ref{thm:signrank-mech} yield $m\ge d$.
\end{proof}

This proposition is the finite form of the prepare--test mechanism.  The geometry in Section~\ref{sec:prepare-test} supplies the automata that realize all sign patterns; sign-rank records the resulting obstruction in a prefix--suffix matrix.  The comparison with Forster's method now becomes transparent: the complete shattering matrix gives an exact finite witness, while the square spectral method asks for a square restriction with small spectral norm.

\subsection{What the square spectral method can certify}
\label{subsec:spectral-barrier}

Forster's inequality converts small spectral norm into large sign-rank.  In its basic square form, however, there is an intrinsic scale limitation: every $L\times L$ sign matrix has Frobenius norm $L$, hence spectral norm at least $\sqrt{L}$, so the certified lower bound $L/\|S\|_2$ never exceeds $\sqrt{L}$.  To certify a quadratic lower bound $\Omega(n^2)$ by a single square spectral inequality, one must therefore start with a square restriction of side length $L=\Omega(n^4)$.  The following two statements make this scale comparison precise.

\begin{proposition}
\label{prop:acceptance-rank}
Let $Q$ be a one-way quantum finite automaton on an $n$-dimensional quantum register, either pure-state or mixed-state, and let $X,Y$ be finite sets of size $L$.
Then,
\[
\mathrm{rank}_{\pm}(S^{Q,\lambda})\le n^2.
\]
\end{proposition}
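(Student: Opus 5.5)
The plan is to exhibit an explicit real matrix of rank at most $n^2$ carrying the sign pattern of $S^{Q,\lambda}$, using the same linearization that gave the upper bounds in Section~\ref{sec:prepare-test}. The pure-state case reduces to the mixed-state case. An MO-1QFA run $|\psi_w\rangle$ corresponds to the density operator $\rho_w=|\psi_w\rangle\langle\psi_w|$ evolving by unitary conjugation channels. So it suffices to treat a one-way automaton whose configuration after reading $x$ is a density operator $\rho_x\in\mathrm{Herm}(\mathcal{H})$, whose letters act by linear maps $\mathcal{E}_\sigma$ on $\mathrm{Herm}(\mathcal{H})$, and whose acceptance probability is $\mathrm{Tr}(E\,\mathcal{E}_{y}(\rho_x))$ for a fixed accepting projector or effect $E$. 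Here $\mathcal{E}_y$ denotes the composite map for the word $y$.

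The key step is a factorization of $f_Q(xy)$. I take adjoints with respect to the Hilbert--Schmidt inner product and write
\[
f_Q(xy)=\mathrm{Tr}\bigl(E\,\mathcal{E}_y(\rho_x)\bigr)=\bigl\langle \mathcal{E}_y^{\ast}(E),\rho_x\bigr\rangle_{\mathrm{HS}},
\]
with $\mathcal{E}_y^\ast(E)\in\mathrm{Herm}(\mathcal{H})$. I then fix an orthonormal Hermitian basis $B_0,\dots,B_{n^2-1}$ as in the proof of Theorem~\ref{thm:cq-linearization}, with $B_0=I/\sqrt n$, and expand both arguments in it. This gives vectors $a_x,e_y\in\mathbb{R}^{n^2}$ with $f_Q(xy)=a_x^\top e_y$. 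Alternatively, the $n^2$-state GFA of Theorem~\ref{thm:cq-linearization} with $c=1$ gives $a_x=uA_x$ and $e_y=A_yv$ directly.

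Next I absorb the cutpoint without spending an extra coordinate. Every $\rho_x$ has trace one, so $\mathrm{Tr}(\rho_x)=\sqrt n\,(a_x)_0=1$. Hence
\[
f_Q(xy)-\lambda=a_x^\top\bigl(e_y-\lambda\sqrt n\,\mathbf{e}_0\bigr).
\]
This is the analogue of the remark in the proof of Theorem~\ref{thm:signrank-mech} that probability vectors absorb the cutpoint shift. The matrix $R_{x,y}=f_Q(xy)-\lambda$ therefore has rank at most $n^2$.

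The main obstacle is the sign convention: $\mathrm{sgn}(0)=-1$, so entries with $f_Q(xy)=\lambda$ have $R_{x,y}=0$ and do not strictly match the sign. To handle this, I replace $\lambda$ by $\lambda'=\lambda+\varepsilon$, where $\varepsilon>0$ is smaller than every positive gap $f_Q(xy)-\lambda$ on the finite set $X\times Y$, exactly as in Theorem~\ref{thm:signrank-mech}. Then every entry of $R'_{x,y}=f_Q(xy)-\lambda'$ is nonzero with sign $S^{Q,\lambda}_{x,y}$, and $R'$ still has rank at most $n^2$ by the trace-absorption step. This yields $\mathrm{rank}_{\pm}(S^{Q,\lambda})\le n^2$, independently of $L$.
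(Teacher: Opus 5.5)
Your proposal is correct and follows essentially the same route as the paper. You factor $f_Q(xy)=\mathrm{Tr}(E_y\rho_x)$ through the $n^2$-dimensional real space of Hermitian operators (your $\mathcal{E}_y^{\ast}(E)$ is the paper's $E_y$), absorb the cutpoint using $\mathrm{Tr}(\rho_x)=1$, and shift $\lambda$ by a small $\varepsilon>0$ to make every sign strict; your orthonormal Hermitian basis with $B_0=I/\sqrt n$ is just a coordinate version of the paper's $\mathrm{Tr}((E_y-(\lambda+\varepsilon)I)\rho_x)$, and it makes the paper's informal $\mathrm{vec}$ notation precise.
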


\begin{proof}
For each $x\in X$, let $\rho_x$ be the state after reading $x$.
For each $y\in Y$, let $E_y$ be the acceptance effect corresponding to continuing with $y$ and the end-marker.
Then
\[
f_Q(xy)=\mathrm{Tr}(E_y\rho_x)=\langle \mathrm{vec}(\rho_x),\mathrm{vec}(E_y)\rangle,
\]
so the matrix $(f_Q(xy))_{x,y}$ factors through $\mathbb{R}^{n^2}$ and therefore has rank at most $n^2$.
Since every $\rho_x$ has trace one, subtracting the cutpoint can be absorbed into the effect side as $\mathrm{Tr}((E_y-\lambda I)\rho_x)$. To obtain a strict sign-rank realization, choose $\varepsilon>0$ smaller than every positive gap $f_Q(xy)-\lambda$ on $X\times Y$, or choose any $\varepsilon>0$ if no positive gap occurs.  The shifted matrix
\[
M^{\varepsilon}_{x,y}=f_Q(xy)-(\lambda+\varepsilon)
=\mathrm{Tr}((E_y-(\lambda+\varepsilon)I)\rho_x)
\]
has rank at most $n^2$ and satisfies $S^{Q,\lambda}_{x,y}M^{\varepsilon}_{x,y}>0$ for every $(x,y)$.  Hence $\mathrm{rank}_{\pm}(S^{Q,\lambda})\le n^2$.
\end{proof}

\begin{theorem}
\label{thm:spectral-barrier}
Let $S\in\{\pm1\}^{L\times L}$ be any sign matrix.
The numerical certificate supplied by the basic square Forster inequality
\[
\mathrm{rank}_{\pm}(S)\ge \frac{L}{\|S\|_2}
\]
is at most $\sqrt{L}$.
Consequently, any attempt to prove an explicit $\Omega(n^2)$ lower bound for an $n$-dimensional one-way quantum automaton by combining a concrete square prefix--suffix construction with this square spectral certificate must use $L=\Omega(n^4)$.
\end{theorem}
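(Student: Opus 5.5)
The plan has two parts: a norm comparison for the numerical cap, and a short chain of inequalities combining it with Proposition~\ref{prop:acceptance-rank}.

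\textbf{The cap.} I will bound $\|S\|_2$ from below using the Frobenius norm. Every entry of $S$ is $\pm1$, so
\[
\|S\|_{\mathrm{F}}^2=\sum_{x,y}S_{x,y}^2=L^2 .
\]
The spectral norm dominates the root-mean-square singular value. Since $S$ has at most $L$ nonzero singular values $\sigma_1\ge\dots\ge\sigma_L$,
\[
L^2=\|S\|_{\mathrm{F}}^2=\sum_i\sigma_i^2\le L\,\sigma_1^2,
\]
hence $\|S\|_2=\sigma_1\ge\sqrt L$. Therefore the certified value satisfies $L/\|S\|_2\le L/\sqrt L=\sqrt L$. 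The bound is attained for Hadamard matrices, which I will mention only as a remark.

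\textbf{The consequence.} Suppose a concrete square prefix--suffix construction with $|X|=|Y|=L$ for an $n$-dimensional one-way quantum automaton is meant to certify a lower bound of at least $c_0n^2$ through this inequality. The bound obtained is the certified value $L/\|S^{Q,\lambda}\|_2$. By the cap above, this value is at most $\sqrt L$. Requiring $c_0n^2\le\sqrt L$ forces $L\ge c_0^2n^4$, that is, $L=\Omega(n^4)$.

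I will also remark that Proposition~\ref{prop:acceptance-rank} shows such a certificate can never exceed $n^2$ anyway, since the sign-rank itself is at most $n^2$. This shows the target $\Omega(n^2)$ is the maximal possible scale, so the $n^4$ requirement is not vacuous.

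\textbf{Main obstacle.} The only delicate point is interpretive: making precise what ``using this square spectral certificate'' means. I will phrase it as requiring the numerical quantity $L/\|S\|_2$ itself to be $\Omega(n^2)$. Under that reading the conclusion follows immediately, independent of which automaton or which prefix and suffix sets are chosen.
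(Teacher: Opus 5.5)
Your proposal is correct and follows essentially the same argument as the paper: $\|S\|_{\mathrm{F}}=L$ gives $\|S\|_2\ge\sqrt{L}$, hence $L/\|S\|_2\le\sqrt{L}$, and requiring this certificate to be of order $n^2$ forces $L=\Omega(n^4)$. Your aside invoking Proposition~\ref{prop:acceptance-rank} is accurate but is not needed for the argument, and the paper's proof does not use it.
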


\begin{proof}
Since $S$ has $L^2$ entries of magnitude $1$, its Frobenius norm is $\|S\|_{\mathrm{F}}=L$.
Because $\|S\|_2\ge \|S\|_{\mathrm{F}}/\sqrt{L}=\sqrt{L}$, the numerical lower bound certified by Forster's inequality satisfies
\[
\frac{L}{\|S\|_2}\le \sqrt{L}.
\]
Therefore a lower bound of order $n^2$ through this square spectral route requires $\sqrt{L}=\Omega(n^2)$, namely $L=\Omega(n^4)$.
\end{proof}

The communication-complexity interpretation gives the same distinction in another language.
Given finite prefix and suffix sets $X,Y$, define the Boolean function
\[
g_{X,Y}(x,y)=1 \iff xy\in L.
\]
By the classical characterization of Paturi and Simon, the unbounded-error communication complexity of $g_{X,Y}$ is $\Theta(\log \mathrm{rank}_{\pm}(S^{A,\lambda}))$~\cite{PaturiSimon1986}.
Thus strict-cutpoint lower bounds on finite prefix--suffix restrictions become unbounded-error communication lower bounds.
Related parameters such as margin and factorization norms refine this viewpoint and connect spectral norm, sign-rank, and communication complexity more broadly~\cite{LinialMendelsonSchechtmanShraibman2007}.

\begin{remark}
\label{rem:explicit-gap}
Theorem~\ref{thm:mo-quadratic} proves an $\Omega(n^2)$ lower bound for MO-1QFA, matching the quadratic order of the $2n^2+6$ upper bound in Corollary~\ref{cor:mo-worstcase}.
Proposition~\ref{prop:section3-signrank-form} shows that this lower bound has an exact finite sign-rank witness.  Theorem~\ref{thm:spectral-barrier} explains the precise limitation of a proof that uses only the basic square form of Forster's spectral method.  This is a limitation of the certificate, rather than of sign-rank itself: the complete shattering matrix already gives the required quadratic-order witness at its natural dimension.
\end{remark}

\section{Quantum advantage and the boundary of simulation}
\label{sec:twoway-boundary}

The preceding sections establish tight simulation costs for two one-way quantum models and record their finite sign-rank witnesses.  Together with known two-way separations, these results give a clean picture of how quantum advantage changes across finite automata.  For the effective one-way models considered in this paper, recognition under strict cutpoints is already shared with probabilistic automata, and the comparison is therefore the exact state cost of simulation.  Bounded-error two-way motion shifts the comparison to recognition power and running time.

Within the one-way regime, state succinctness remains relevant.  Measure-once one-way quantum finite automata have limited recognition power under isolated cutpoints, recognizing only group languages, a proper subclass of regular languages~\cite{BrodskyPippenger2002}.  Within this restricted class, modular languages already give exponential state succinctness: for prime moduli, one-way quantum automata use $O(\log p)$ states, whereas probabilistic automata require $\Omega(p)$ states~\cite{AmbainisFreivalds1998}.  Dense-coding methods and quantum--classical constructions have extended this descriptional-complexity theme further~\cite{AmbainisNayakTaShmaVazirani2002,QiuLiMateusSernadas2015}.  More general one-way models recover exactly the regular languages with bounded error~\cite{LiQiuZouEtAl2012}, and hybrid one-way automata fit the same succinctness picture~\cite{ZhengQiuGruska2013,QiuLiMateusSernadas2015}.  These results set the stage for the invariant studied in Sections~\ref{sec:prepare-test}--\ref{sec:signrank}: once probabilistic recognition is assured, quantum advantage persists as the state cost of exact threshold simulation, and the sharp quadratic laws established above close this question for the two representative one-way models.

Two-way head motion gives the next boundary.  Ambainis and Watrous showed that 2QCFA recognize the palindrome language with bounded error, a language outside bounded-error 2PFA recognition, and that 2QCFA recognize $L_{eq}=\{a^n b^n:n\ge 0\}$ in expected polynomial time whereas 2PFA require exponential expected time~\cite{AmbainisWatrous,DworkStockmeyer1990}.  These separations show that once two-way quantum--classical control is allowed, the first obstruction to classical simulation may occur before an exact state-count comparison: probabilistic recognition or efficient probabilistic simulation may already be unavailable.

These two-way results are included only to locate the scope of the one-way theorems.  Sections~\ref{sec:prepare-test}--\ref{sec:signrank} prove exact state-cost laws in a regime where recognition is already shared.  Once the model changes enough to alter recognition power or efficient simulation, the first obstruction is no longer the number of simulator states.  The role of the one-way results is therefore precise: they identify the finite-dimensional geometry that remains visible after recognition-level differences have been neutralized by the strict-cutpoint setting.

\section{Conclusion}
\label{sec:discussion}

This paper identifies exact probabilistic simulation cost as the natural quantitative measure of quantum advantage for finite automata under strict cutpoints, and establishes sharp laws for two representative one-way models.  A $(c,q)$-1QCFA has exact simulation cost $\Theta(cq^2)$, governed by its block-diagonal classical--quantum operator space; an $n$-dimensional MO-1QFA has worst-case cost $\Theta(n^2)$, driven by the quadratic-dimensional orbit of its accepting projector.  The common prepare--test mechanism turns the real degrees of freedom visible to the accepting functional into threshold shattering, and the sign-rank viewpoint records the same obstruction in finite prefix--suffix matrices while clarifying the complementary role of Forster's spectral method.

These one-way simulation laws occupy a specific position in the landscape of quantum advantage.  Recognition power, state succinctness, and simulation cost form a hierarchy of progressively finer comparisons between quantum and probabilistic finite automata.  In the effective one-way strict-cutpoint regime, the recognition-level separation disappears because both models recognize the same stochastic languages; the remaining advantage is quantitative, and this paper sharpens it to exact simulation cost.  Two-way motion introduces a stronger boundary where simulation can fail for recognition or time-complexity reasons.  The quadratic laws proved here therefore characterize the effective one-way regime as the domain where quantum advantage survives purely as exact classical simulation cost.

\end{document}